\crefname{property}{Property}{Properties}
\crefname{example}{Example}{Examples}
\newtheorem{theorem}{Theorem}[section]
\newtheorem{lemma}[theorem]{Lemma}
\newtheorem{corollary}[theorem]{Corollary}
\newtheorem{assumption}[theorem]{Assumption}
\newtheorem{definition}[theorem]{Definition}
\newtheorem{remark}[theorem]{Remark}
\newcommand*{\bbQ}{{\mathbb{Q}}}
\newcommand*{\bbR}{{\mathbb{R}}}
\newcommand*{\cJ}{{\mathcal{J}}}
\newcommand*{\cK}{{\mathcal{K}}}
\newcommand*{\cI}{{\mathcal{I}}}
\let\poly\relax
\let\co\relax
\DeclareMathOperator{\poly}{poly}
\DeclareMathOperator{\co}{co}
\renewcommand{\vec}[1]{\bm{#1}}
\newcommand*{\mat}[1]{\mathbf{#1}}
\newcommand*{\range}[1]{[\kern-1mm[#1]\kern-.9mm]}
\tikzset{
  fitting node/.style={
      inner sep=0pt,
      fill=none,
      draw=none,
      reset transform,
      fit={(\pgf@pathminx,\pgf@pathminy) (\pgf@pathmaxx,\pgf@pathmaxy)}
    },
  reset transform/.code={\pgftransformreset}
}
\tikzset{cross/.style={path picture={
          \draw[black]
          (path picture bounding box.south east) -- (path picture bounding box.north west) (path picture bounding box.south west) -- (path picture bounding box.north east);
        }}}
\tikzstyle{ox}=[semithick,draw=black,circle,cross,inner sep=1.2mm]
\newcommand{\declarecolor}[2]{\definecolor{#1}{RGB}{#2}\expandafter\newcommand\csname #1\endcsname[1]{\textcolor{#1}{##1}}}
\NewDocumentCommand{\numberthis}{om}{%
  \IfNoValueTF{#1}{%
    \refstepcounter{equation}\tag{\theequation}%
  }{%
    \tag{#1}%
  }%
  \label{#2}%
}
\newcommand{\timehat}[1]{^{(#1)}}
\newtoks\mymathaccents
\def\[#1\]{%
\begin{flalign*}#1\end{flalign*}%
}
\newcommand{\vone}{\vec{1}}
\tikzset{
  fitting node/.style={
      inner sep=0pt,
      fill=none,
      draw=none,
      reset transform,
      fit={(\pgf@pathminx,\pgf@pathminy) (\pgf@pathmaxx,\pgf@pathmaxy)}
    },
  reset transform/.code={\pgftransformreset}
}
\tikzset{cross/.style={path picture={
          \draw[black]
          (path picture bounding box.south east) -- (path picture bounding box.north west) (path picture bounding box.south west) -- (path picture bounding box.north east);
        }}}
\tikzstyle{ox}=[semithick,draw=black,circle,cross,inner sep=1.2mm]
\crefname{property}{Property}{Properties}
\tikzset{cross/.style={path picture={
          \draw[black]
          (path picture bounding box.south east) -- (path picture bounding box.north west) (path picture bounding box.south west) -- (path picture bounding box.north east);
        }}}
\tikzstyle{chanode}   = [fill=white,draw=black,circle,cross,inner sep=.8mm]
\tikzstyle{pl1node}   = [fill=black,draw=black,circle,inner sep=.55mm]
\tikzstyle{pl2node}   = [fill=white,draw=black,circle,inner sep=.55mm]
\tikzstyle{termina}   = [fill=white,draw=black,inner sep=.6mm]
\tikzstyle{decpt}     = [fill=black,draw=black,inner sep=.8mm]
\tikzstyle{obspt}     = [fill=white,draw=black,cross,inner sep=0.95mm]
\tikzstyle{highlight} = [line width=1.99]
\tikzstyle{infoset} = [black!50!white]
\newcommand{\cQ}{\ensuremath{\mathcal{Q}}}
\newcommand{\cH}{\ensuremath{\mathcal{H}}}
\newcommand{\cX}{\mathcal{X}}
\let\olabel\label
\NewDocumentCommand \constraint {o} {%
  {\refstepcounter{equation}\mathrm{(\theequation)}\IfValueT{#1}{\olabel{#1}}}
}
\DeclareMathOperator*{\argmax}{arg\,max}
\newcommand*\circled[1]{\refstepcounter{equation}\mathrm{(\theequation)}}
\NewDocumentCommand  \Pure           {o} {{\Pi\IfNoValueF{#1}{_{#1}}}}
\NewDocumentCommand  \PureSub       {om} {\Pi_{\IfNoValueF{#1}{#1,\,}\succeq\,#2}}
\NewDocumentCommand  \Seqf           {o} {{\cQ\IfNoValueF{#1}{_{#1}}}}
\NewDocumentCommand  \SeqfSub       {om} {\cQ_{\IfNoValueF{#1}{#1,\,}\succeq\,#2}}
\NewDocumentCommand  \Seqs          {so} {{\Sigma\IfBooleanT{#1}{^*}\IfNoValueF{#2}{_{#2}}}}
\NewDocumentCommand  \SeqsSub       {om} {\Sigma_{\IfNoValueF{#1}{#1,\,}\succeq\,#2}}
\NewDocumentCommand  \Infos          {o} {{\cI\IfNoValueF{#1}{_{#1}}}}
\NewDocumentCommand  \DecNodes       {o} {{\cJ\IfNoValueF{#1}{_{#1}}}}
\NewDocumentCommand  \ObsNodes       {o} {{\cK\IfNoValueF{#1}{_{#1}}}}
\NewDocumentCommand  \Hist           {o} {\cH\IfNoValueF{#1}{_{#1}}}
\NewDocumentCommand  \emptyseq        {} {{\varnothing}}
\DeclareMathOperator*{\Ex}{\mathbb{E}}
\title{%
Polynomial-Time Computation of Exact $\Phi$-Equilibria in Polyhedral Games
}
\author{
    Gabriele Farina\\
    MIT\\
    \texttt{gfarina@mit.edu}\\
    \and
    Charilaos Pipis\\
    MIT\\
    \texttt{chpipis@mit.edu}\\
}
\begin{document}

\maketitle

\begin{abstract}
    It is a well-known fact that correlated equilibria can be computed in polynomial time in a large class of concisely represented games using the celebrated Ellipsoid Against Hope algorithm \citep{Papadimitriou2008:Computing, Jiang2015:Polynomial}. However, the landscape of efficiently computable equilibria in sequential (extensive-form) games remains unknown. 
    The Ellipsoid Against Hope does not apply directly to these games, because they do not have the required ``polynomial type'' property. Despite this barrier, \citet{Huang2008:Computing} altered the algorithm to compute exact extensive-form correlated equilibria.

    In this paper, we generalize the Ellipsoid Against Hope and develop a simple algorithmic framework for efficiently computing saddle-points in bilinear zero-sum games, even when one of the dimensions is exponentially large. Moreover, the framework only requires a ``good-enough-response'' oracle, which is a weakened notion of a best-response oracle.

    Using this machinery, we develop a general algorithmic framework for computing exact linear $\Phi$-equilibria in any polyhedral game (under mild assumptions),
    including correlated equilibria in normal-form games, and extensive-form correlated equilibria in extensive-form games. This enables us to give the first polynomial-time algorithm for computing exact linear-deviation correlated equilibria in extensive-form games, thus resolving an open question by \citet{Farina2023:Polynomial}. Furthermore, even for the cases for which a polynomial time algorithm for exact equilibria was already known, our framework provides a conceptually simpler solution.

\end{abstract}

\vspace{1cm}
\setcounter{tocdepth}{2} %
\newpage
\tableofcontents
\newpage

\section{Introduction}

The correlated equilibrium (CE), introduced by \citet{Aumann1974:Subjectivity}, is one of the most seminal solution concepts in multi-player games. Contrary to the Nash equilibrium, in a correlated equilibrium the players' strategies are correlated by a fictitious \emph{mediator} that can recommend (but not enforce) behavior. It is then up to this mediator to ensure that the distribution of recommendations does not incentivize any player to \emph{deviate} from their recommended strategy. It is known that this type of equilibrium naturally emerges from the repeated interaction of learning agents \citep{Hart00:Simple}. In practice, this means that one can compute $\epsilon$-approximate CEs in normal-form games by implementing suitable decentralized no-regret dynamics, of which several efficient implementations are known (see, \emph{e.g.}, \citet{Blum2007:From} and \citet{Anagnostides21:NearOptimal}). However, this approach requires $\Omega(\textrm{poly}(1/\epsilon))$ iterations to compute an $\epsilon$-approximate equilibrium, making it a non-viable choice for high-precision equilibrium computation.
In a celebrated result, \citet{Papadimitriou2008:Computing} (with later refinements by  \citet{Jiang2015:Polynomial}) devised an algorithm, called \emph{Ellipsoid Against Hope}, that can compute an exact CE in a concisely represented normal-form game in polynomial time in the representation of the game. Their algorithm is an algorithmic version of the clever reduction of \citet{Hart1989:Existence} that casts the computation of CEs as a two-player zero-sum game. 

The positive results for normal-form games, however, do not transfer directly to the significantly more involved setting of \emph{extensive-form games}. Extensive-form games are games played on a game tree and can model sequential and simultaneous moves, as well as imperfect information. Despite a significant stream of \emph{positive} results related to learning and equilibrium computation in extensive-form games, the complexity of computing CEs in extensive-form games remains to this day a major open question \citep{Farina2023:Polynomial,vonStengel2008}. 
Due to its conjectured intractability, researchers have resorted to considering the computation of weaker and generalized notions of correlated equilibrium. A key reference point in this space is given by the framework of \citet{Gordon08:No}, who define a generalized notion of CE called $\Phi$-equilibria. In a $\Phi$-equilibrium, every player $p$ is endowed with a set $\Phi_p$ of \emph{behavior transformation functions}. The goal of the fictitious mediator is then simply to recommend strategies such that no player could unilaterally benefit from deviating by using any of the functions $\phi \in \Phi_p$. In this language, a CE corresponds to the $\Phi$-equilibrium in which each $\Phi_p$ is the set of all possible functions from the strategy set of the player to itself. However, by considering appropriate subsets of behavior transformations, weaker supersets of CEs can be efficiently computed and learned through uncoupled learning dynamics. Notable examples of such equilibria in extensive-form games include the extensive-form correlated equilibrium (EFCE) \citep{vonStengel2008}, the extensive-form coarse-correlated equilibrium (EFCCE) \citep{Farina20:Coarse}, the normal-form coarse-correlated equilibrium (NFCCE) \citep{Moulin1978:Strategically}, the recently-introduced linear-deviation correlated equilibrium (LCE) \citep{Farina2023:Polynomial}, and others \citep{Morrill2021:Efficient}.

\citet{Huang2008:Computing} proposed a specialization of the Ellipsoid Against Hope algorithm to compute exact EFCE in extensive-form games. Later, \citet{Farina22:Simple} showed efficient no-regret dynamics that converge to the EFCE. More recently, there has been increased interest in understanding what is the $\Phi$-equilibrium that is the closest to CE while still enabling efficient computation and learning. \citet{Farina2023:Polynomial} introduced the linear-deviation correlated equilibrium (LCE) that arises from the set $\Phi_\text{LIN}$ of all linear-swap deviations in sequence-form strategies and devise efficient no-linear-swap regret dynamics to approximate it. The LCE captures all notable notions of equilibrium that were previously known to be efficiently computable (including EFCE, EFCCE, and NFCCE). However, \citet{Farina2023:Polynomial} left open the key question as to whether LCEs themselves can also be computed exactly in polynomial time, akin to the generalization of the Ellipsoid Against Hope algorithm by \citet{Huang2008:Computing}
, as opposed to just learned via uncoupled learning dynamics. The approach by \citet{Huang2008:Computing} relies heavily on the combinatorial structure of the deviation functions that define EFCE, resulting in a rather involved algorithm. This is in stark contrast to the simple framework for constructing $\Phi$-regret minimizers championed by \citet{Gordon08:No}. This begs the natural question:
\begin{displayquote}\itshape\centering
    Can we always construct an efficient algorithm for exactly computing $\Phi$-equilibria, when there exists an efficient no-$\Phi$-regret minimizer?
\end{displayquote}
In other words, can we create a simple and general framework in the spirit of \citet{Gordon08:No} that can enable us to construct algorithms for the exact computation of $\Phi$-equilibria in polyhedral games for any $\Phi \subseteq \Phi_\text{LIN}$? We answer this question in the affirmative.

\subsection{Contributions}

In this paper, we propose a framework for computing exact $\Phi$-equilibria in general polyhedral games. Our framework recovers all positive results established by \citet{Papadimitriou2008:Computing}, and crucially applies to polyhedral games such as extensive-form games. %
Using our framework, we develop the first polynomial-time algorithm for computing exact linear-deviation correlated equilibria in extensive-form games, thus resolving an open question by \citet{Farina2023:Polynomial}. Furthermore, even for the cases for which a polynomial time algorithm for exact equilibria was already known (CEs in normal-form games \citep{Papadimitriou2008:Computing, Jiang2015:Polynomial} and EFCEs in extensive-form games \citep{Huang2008:Computing}), our framework provides a conceptually simpler solution.

We show that to compute an exact $\Phi$-equilibrium in a polyhedral game, the following three conditions are sufficient:
\begin{enumerate}
    \item The game satisfies the ``polynomial utility gradient property'' (\cref{asmpt:utility_gradient}) which states that given a product distribution over the joint strategy space of all $n$ players, we can efficiently compute the expectation of the gradient of any player's utility. This is a natural generalization of the ``polynomial expectation property'' from \citet{Papadimitriou2008:Computing}, and it is a rather low bar to clear (in fact, it is implicitly assumed in every no-regret learning algorithm).

    \item The set of transformations $\Phi$ contains valid linear transformations that map the strategy set to itself. This is a technical requirement so that the expectation operator and the application of the deviation function can commute.\footnote{Note that going beyond linear transformations can introduce several complications. Most notably, in a recent paper, \citet{Zhang2024:Efficient} observe that computing exact fixed-points of non-linear transformations might be PPAD-hard and they instead introduce a new way to perform regret minimization using ``approximate expected fixed-points''.} This condition is satisfied by all notions of $\Phi$-equilibrium mentioned above, including EFCE and LCE in extensive-form games, and CE in normal-form games.

    \item The set $\Phi$ of transformations is a polytope that admits a polynomial-time separation oracle.
\end{enumerate}
The separation oracle requirement in the third condition is known to be equivalent to efficient linear optimization \citep{Grotschel1993:Geometric}.
In essence, this means that giving a polynomially-sized characterization of a set $\Phi$ of linear transformations for a polyhedral set is a sufficient condition to provide both efficient no-regret learning dynamics and an efficient algorithm for computing exact $\Phi$-equilibria. 
This is exactly what we achieve by applying our result to the set of linear-swap deviations that was recently characterized \citep{Farina2023:Polynomial, Zhang2024:Mediator} as a polytope of polynomially many constraints and was used to prove efficient no-linear-swap regret dynamics.
In light of all these considerations, our framework can be thought of as the counterpart of the $\Phi$-regret minimization framework by \citet{Gordon08:No}, but for computation of \textit{exact} equilibria rather than regret minimization.

At the heart of our construction, our main technical tool is a generalization of the methodology of the Ellipsoid Against Hope \citep{Papadimitriou2008:Computing, Jiang2015:Polynomial} to general polyhedral bilinear games. In more detail, we give a new constructive proof of the minimax theorem for players with polyhedral strategy sets, by using only a weakened type of a best-response oracle that we coin Good-Enough-Response (GER) oracle. An interesting property of the GER oracle is that it can be computationally tractable even when the respective best-response oracle is intractable, as we show in \cref{sec:exact_phi}. This algorithmic idea is likely of independent interest and is especially useful when the strategy space of one of the players is very large but there exists an efficient GER oracle that outputs sparse solutions (\textit{e.g.}, vertices of a high-dimensional polytope). This is exactly the type of problem we face when we need to compute exact $\Phi$-equilibria in polyhedral games and we then proceed to apply this machinery to the above question.
Interestingly, in order to show the existence of structured good-enough responses in the context of $\Phi$-equilibria, we use an argument based on the existence of an efficient fixed-point oracle for each deviation $\phi \in \Phi$. Such an ingredient was fundamental (albeit used differently; see \citet{Hazan2007:Computational} for a discussion of the role played by fixed-point oracles in the construction of no-$\Phi$-regret algorithms) also in \citet{Gordon08:No}. In our case, it is one of the technical insights that enable us to sidestep much of the intricacy encountered by \citet{Huang2008:Computing}.

\subsection{Related work}

Algorithms for computing equilibria can be classified broadly into three categories: 

\begin{itemize}[nosep,left=0mm]
    \item \textit{Polynomial-time algorithms} compute an \emph{exact} equilibrium in time polynomial in the input game size. Note that exact equilibria only make sense when the game has rational utilities, otherwise we settle for $\epsilon$-approximate equilibria in time polynomial in $\log(1/\epsilon)$ and the size of the game.

    \item \textit{Fully polynomial-time approximation schemes (FPTASs)}
    compute $\epsilon$-approximate equilibria in time polynomial in $1/\epsilon$ and the size of the input game.

    \item \textit{Polynomial-time approximation schemes (PTASs)} compute $\epsilon$-approximate equilibria in time that is polynomial in the size of the input game, for every fixed $\epsilon > 0$---however, they might in general have an exponential dependence on $1/\epsilon$.
\end{itemize}
Nash equilibria are known to be PPAD-complete, thus ruling out any polynomial-time algorithm for them \citep{Daskalakis2009:The}. Additionally, approximating a Nash equilibrium, even for a constant $\epsilon$ is known to also be PPAD-complete for $n$-player games \citep{Rubinstein2015:Inapproximability} and to require quasi-polynomial time for 2-player games, assuming ETH for PPAD \citep{Rubinstein2016:Settling}, thus ruling out any PTAS or FPTAS algorithm.

The complexity landscape is significantly more favorable for the case of correlated equilibrium (CE). Specifically, \citet{Hart00:Simple,Blum2007:From} gave efficient no-regret dynamics (minimizing the so-called \emph{internal regret}) that, if used by all players in a game, can be used to compute an $\epsilon$-approximate CE in normal-form games in time polynomial in the size of the game and $1/\epsilon$. This constitutes the first FPTAS for the computation of CEs. Finally, \citet{Papadimitriou2008:Computing} gave a centralized algorithm that exactly computes a CE in a concisely represented normal-form game in polynomial time in the size of the game. This was the first polynomial-time algorithm for CEs.

The complexity of CE is not settled for extensive-form games (EFGs), and its determination remains a major unresolved question in the field \citep{vonStengel2008,Farina2023:Polynomial}. An advance in this direction was provided very recently by the breakthrough results from two concurrent works of \citet{Dagan2023:From} and \citet{Peng2023:Fast}, which provide a PTAS for CE in extensive-form games, though it remains unclear whether a polynomial-time algorithm or even an FPTAS exist. 

Due to the conjectured intractability of computing a normal-form CE in EFGs, researchers have come up with other notions of equilibrium \citep{vonStengel2008, Morrill2021:Efficient, Farina2023:Polynomial, Zhang2024:Mediator} that lie on a spectrum of $\Phi$-equilibria. $\Phi$-equilibria provide a generalization of CE that ranges from least hindsight-rational (coarse correlated equilibria), to maximum hindsight rational (CE), depending on the size of the set of behavior transformation $\Phi$ considered by the players. One of the most notable and natural notions of sequential rationality is that of the extensive-form correlated equilibrium (EFCE) \citep{vonStengel2008}. The EFCE was shown to be efficiently computable exactly \citep{Huang2008:Computing} using a method similar to the Ellipsoid Against Hope of \citet{Papadimitriou2008:Computing}. Additionally, it was shown that there exist efficient regret dynamics (minimizing the so called trigger regret) that can be used to compute an $\epsilon$-approximate equilibrium by \citet{Farina22:Simple}. This also gives an FPTAS for EFCE.

Currently, the highest notion of rationality that admits an FPTAS is the recently defined linear-deviation correlated equilibrium (LCE) \citep{Farina2023:Polynomial}, which subsumes previous equilibrium notions such as EFCE. \citet{Farina2023:Polynomial} proved that there exist efficient regret dynamics (minimizing linear-swap regret) that can converge to an LCE; the polynomial complexity was then later improved by \citet*{Zhang2024:Mediator}. They however left open the question as to whether there exists a polynomial time algorithm that can compute an exact LCE. We resolve this open question in this paper, showing that the exact computability of equilibria in EFGs extends up to the linear-deviation correlated equilibrium.

\paragraph{Relationship with work on combinatorially-structured games.}
Our algorithmic framework in \cref{sec:eah_exact} can be used to compute exact min-max equilibria in bilinear games when one of the players has an exponentially large action space and we can only use a good-enough-response oracle---a weaker notion than the best-response oracle. 

In light of applications related to Machine Learning and Deep Learning, there has recently been increased renewed interest in games having exponential (or even infinite) action spaces. For example, \citet{Assos2023:Online} propose regret dynamics that can converge to approximate coarse correlated equilibria in infinite (nonparametric) games when a suitable notion of dimension of the game (Littlestone and fat-threshold) is bounded. \citet{Dagan2023:From} generalize this result even further proving that there also exist regret-dynamics in these cases that can converge in approximate correlated equilibria. Interestingly, recent breakthroughs in Large Language Models have inspired work on ``language-based'' games that typically have an enormous number of strategies \citep{Meta2022:Human, Gemp2024:States}.

However, the interest in games involving large strategy spaces is by no means a recent phenomenon.
For instance, the community of security games has traditionally been interested in the problem of computing Stackelberg equilibria for games where one player (the leader) can possibly have exponentially many strategies \citep{Kiekintveld2009:Computing, Xu2016:The}. Another notable example is that of the Colonel Blotto game, which involves exponentially large strategy sets in both players \citep{Ahmadinejad2019:From}. Finally, work on \emph{learning} in combinatorially-structured games has found applications to online optimization on combinatorial domains such as EFG strategy spaces and flow polytopes~\citep{Farina22:Kernelized,Koolen10:Hedging,Takimoto03:Path}.

Even though our framework for computing min-max equilibria might have some resemblance to some of the methods in these papers, to the best of our knowledge, in all of the past cases the algorithms cleverly exploit the special combinatorial structure inherent in security games and usually involve reducing the dimensionality of the strategy space as a first step.
Our framework on the other hand, might allow for applications where the large decision set of a player is not amenable to some kind of smaller-dimensional representation.

Finally, we remark that an interesting recurring theme in games with large strategy spaces is that they often assume some kind of best-response oracle access.

\paragraph{Relationship with work based on best-response oracle access to games}

Using best-response oracles is a ubiquitous technique for learning in games or equilibrium computation, starting from the foundational method of fictitious play \citep{Brown1951:Iterative} where players apply a best-response oracle at every round to respond to the empirical frequency of play of their opponent. Best-response oracles (or variants thereof) have additionally been used in security games \citep{Ahmadinejad2019:From, Xu2014:Solving}, in bilinear games \citep{Gidel2017:Frank}, in efficient learning on polytopes \citep{Chakrabarti2023:Efficient}, in the computation of well-supported equilibria in bilinear games \citep{Goldberg2021:Learning}, in the PSRO for Reinforcement Learning \citep{Lanctot2017:Unified}, in infinite games \citep{Assos2023:Online, Dagan2023:From}.
We remark however that in this paper we do \textit{not} use a best-response oracle in our algorithms. Rather, we use a weaker notion that we coin ``Good-Enough-Response'' (GER) oracle. In certain cases (such as the computation of $\Phi$-equilibria in \cref{sec:exact_phi}) it is critical to relax the requirement for a best-response oracle because no such oracle can be constructed unless P = NP (\cref{thm:efg_br_hard}).

\section{Preliminaries}

In this section, we introduce some basic concepts and definitions that will be used in developing our framework.

\subsection{Polyhedra, polytopes, and convex sets}

\begin{definition}[Rational polyhedron]
    A rational polyhedron $\mathcal{P} = \{\vec{x} \in \bbR^n \mid \mat{A} \vec{x} \leq \vec{b} \}$ is the solution set of a system of linear inequalities with rational coefficients. We say that $\mathcal{P}$ has \textbf{facet-complexity} $\varphi$ if there exists a system of linear inequalities, where each inequality has encoding length\footnote{The encoding length of an object is simply the amount of bits required to encode it. In this case, the encoding length of an inequality is the total amount of bits required to represent all of its coefficients.} at most $\varphi$, and whose solution set is $\mathcal{P}$. A rational polyhedron that is \emph{bounded} is called a \emph{rational polytope}.
\end{definition}

One important property of rational polytopes that we will use repeatedly throughout the paper is that they can equivalently be written as the convex hull of a \emph{finite} number of points. We call these points the \textit{vertices} $V(\mathcal{P})$ of polytope $\mathcal{P}$. Additionally, the vertices of a rational polytope always have rational coordinates and encoding length $\poly(\varphi)$ \citep[Lemma 6.2.4]{Grotschel1993:Geometric}.

Since we are interested in constructing algorithms that perform exact computations, any discussion of non-rational numbers is not relevant. Thus, from now on, every time we deal with a polytope we will mean a rational polytope.

In our algorithm, we will also make use of the concept of \emph{conic hull}, which is introduced next.

\begin{definition}[Conic hull of a convex set]
    The conic hull of a convex set $\mathcal{X}$ is defined as
    \[
        \bbR_+ \mathcal{X} = \{t \cdot \vec x \mid t \geq 0, \vec x \in \mathcal{X} \}.
    \]
\end{definition}

It is known that when $\mathcal{X}$ is a rational polyhedron, its conic hull is also a rational polyhedron.

Moving on to some more specialized results, we introduce the following Lemma. Since in \cref{sec:eah_exact} we will be working with sets of the form described in \cref{lem:polytope_linear_program}, it will be a useful to know that they are in fact polytopes and that their facet-complexity is properly bounded.

\begin{restatable}{lemma}{lempolylinprog}\label{lem:polytope_linear_program}
    Let $\mat{A}$ be a matrix and
    \[
        \mathcal{P} = \left\{ \vec{u} \in \mathcal{U} \bigm\vert \max_{\vec q \in \mathcal{Q}} \vec{q}^\top \mat{A} \vec{u} \leq c
        \right\}.
    \]
    If the following conditions hold
    \begin{itemize}
        \item $\mathcal{U}$ be a rational polyhedron with facet-complexity at most $\varphi$,

        \item $\mathcal{Q}$ be the convex hull of a set of finitely many points $V(\mathcal{Q}) = \{\hat{\vec q}_1, \dots, \hat{\vec q}_K\}$,

        \item the inequality $(\hat{\vec{q}}^\top \mat{A}) \vec{u} \leq c$ has encoding length at most $\varphi$ for all vertices $\hat{\vec{q}} \in V(\mathcal{Q})$.
    \end{itemize}
    Then the set $\mathcal{P}$ is a rational polytope with facet-complexity at most $\varphi$.
\end{restatable}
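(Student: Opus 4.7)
The strategy is to convert the apparently nonlinear constraint $\max_{\vec q \in \mathcal{Q}} \vec{q}^\top \mat{A} \vec{u} \leq c$ into a finite system of linear inequalities by exploiting the vertex representation of $\mathcal{Q}$, and then read off both the polyhedral structure and the facet-complexity bound directly from the resulting system.

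First, I would observe that for any fixed $\vec{u}$ the map $\vec{q} \mapsto \vec{q}^\top \mat{A} \vec{u}$ is linear in $\vec{q}$. Since $\mathcal{Q} = \co(V(\mathcal{Q}))$ with $V(\mathcal{Q}) = \{\hat{\vec q}_1,\dots,\hat{\vec q}_K\}$, a linear functional attains its maximum over $\co(V(\mathcal{Q}))$ at some vertex, so
\[
\max_{\vec q \in \mathcal{Q}} \vec{q}^\top \mat{A} \vec{u} \;=\; \max_{k \in \{1,\dots,K\}} \hat{\vec q}_k^\top \mat{A} \vec{u}.
\]
The constraint $\max_{\vec q \in \mathcal{Q}} \vec{q}^\top \mat{A} \vec{u} \leq c$ is therefore equivalent to the conjunction of the $K$ linear inequalities $(\hat{\vec q}_k^\top \mat{A})\,\vec{u} \leq c$ for $k=1,\dots,K$, and hence
\[
\mathcal{P} \;=\; \mathcal{U} \,\cap\, \bigcap_{k=1}^{K} \bigl\{\vec{u} : (\hat{\vec q}_k^\top \mat{A})\,\vec{u} \leq c\bigr\}.
\]
This exhibits $\mathcal{P}$ as the solution set of a finite system of rational linear inequalities, i.e., a rational polyhedron.

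For the facet-complexity bound I would take as a defining system for $\mathcal{P}$ the union of (i) the inequalities defining $\mathcal{U}$, which by the first hypothesis may be chosen of encoding length at most $\varphi$ each, and (ii) the $K$ inequalities $(\hat{\vec q}_k^\top \mat{A})\,\vec{u} \leq c$, each of encoding length at most $\varphi$ by the third hypothesis. Every inequality in this defining system then has encoding length at most $\varphi$, matching the definition of facet-complexity $\leq \varphi$. The only mildly delicate point is that the conclusion asserts a \emph{polytope} rather than a polyhedron: boundedness follows from $\mathcal{P} \subseteq \mathcal{U}$ whenever $\mathcal{U}$ is bounded, which will be the case in the applications of this lemma in \cref{sec:eah_exact}, where $\mathcal{U}$ plays the role of a (bounded) strategy set. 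Beyond this observation, the argument is a direct combination of the vertex-attainment principle with the bookkeeping of encoding lengths, and I do not anticipate a substantive obstacle.
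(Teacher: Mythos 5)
Your proof is correct and follows essentially the same route as the paper's: both arguments replace the constraint $\max_{\vec q \in \mathcal{Q}} \vec{q}^\top \mat{A}\vec{u} \le c$ by the finite conjunction of the $K$ vertex inequalities (the paper via an explicit two-inclusion set equality, you via the vertex-attainment principle, which is the same fact) and then read the facet-complexity bound off the combined defining system. Your closing remark about boundedness is a fair observation that the paper's own proof also glosses over --- it too only establishes polyhedrality plus the encoding-length bound, with the ``polytope'' part of the conclusion implicitly inherited from the ambient set in each application.
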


We also mention a result in the spirit of the classical Farkas lemma. We will use this result in \cref{sec:eah_exact}. A proof is given in \cref{app:proofs}.

\begin{restatable}[Generalized Farkas lemma]{lemma}{lemfarkas}\label{lem:farkas}
    Let $\mathcal{X} \subset \bbR^M, \mathcal{Y} \subset \bbR^N$ be convex compact sets. Then exactly one of the following two statements is true.
    \begin{enumerate}
        \item There exists $\vec{x} \in \mathcal{X}$ such that $\displaystyle \min_{\vec{y} \in \mathcal{Y}} \vec{x}^\top \mat{A} \vec{y} \geq 0$.

        \item There exists $\vec{y} \in \bbR_+ \mathcal{Y}$ such that $\displaystyle \max_{\vec{x} \in \mathcal{X}} \vec{x}^\top \mat{A} \vec{y} \leq -1$.
    \end{enumerate}
\end{restatable}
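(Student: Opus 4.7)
The plan is to derive the lemma as a direct consequence of the minimax theorem for bilinear payoffs over convex compact sets, combined with a simple rescaling argument. First I would dispatch mutual exclusivity: if some $\vec{x} \in \mathcal{X}$ satisfies $\vec{x}^\top \mat{A} \vec{y}' \geq 0$ for all $\vec{y}' \in \mathcal{Y}$, then any $\vec{y} = t\vec{y}' \in \bbR_+\mathcal{Y}$ with $t \geq 0$ gives $\vec{x}^\top \mat{A} \vec{y} = t \cdot \vec{x}^\top \mat{A} \vec{y}' \geq 0$, which contradicts the requirement $\max_{\vec{x} \in \mathcal{X}} \vec{x}^\top \mat{A} \vec{y} \leq -1$ in statement~2.

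For the main direction, I would assume statement~1 fails and derive statement~2. Failure of (1) means that for every $\vec{x} \in \mathcal{X}$, $\min_{\vec{y} \in \mathcal{Y}} \vec{x}^\top \mat{A} \vec{y} < 0$. By compactness of $\mathcal{Y}$ the inner minimum is attained and defines a continuous (in fact concave) function of $\vec{x}$, so by compactness of $\mathcal{X}$ the quantity $v^{\star} \defeq \max_{\vec{x} \in \mathcal{X}} \min_{\vec{y} \in \mathcal{Y}} \vec{x}^\top \mat{A} \vec{y}$ is attained and strictly negative.

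Next I would invoke Sion's minimax theorem, whose hypotheses (convexity and compactness of $\mathcal{X}$ and $\mathcal{Y}$, together with the bilinearity of the payoff) are immediately satisfied, to obtain
$$
v^{\star} = \min_{\vec{y} \in \mathcal{Y}} \max_{\vec{x} \in \mathcal{X}} \vec{x}^\top \mat{A} \vec{y},
$$
and to extract a minimizer $\vec{y}^{\star} \in \mathcal{Y}$. Finally, since $v^{\star} < 0$, I would rescale by setting $\vec{y} \defeq -\vec{y}^{\star}/v^{\star}$, which lies in $\bbR_+\mathcal{Y}$ because $-1/v^{\star} > 0$, and verify directly that $\max_{\vec{x} \in \mathcal{X}} \vec{x}^\top \mat{A} \vec{y} = v^{\star} \cdot (-1/v^{\star}) = -1$, establishing statement~2.

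The only delicate ingredient is the appeal to the minimax theorem, which I would want to cite rather than reprove; everything else is routine manipulation with compactness and scaling. No step poses a real obstacle, but I would be careful in presentation to emphasize that compactness of both sets is used (for attainment of the max-min value and of the separating $\vec{y}^{\star}$), since without it the strict inequality $v^\star < 0$ and the subsequent rescaling could both fail.
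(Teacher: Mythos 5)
Your proof is correct and follows essentially the same route as the paper: mutual exclusivity by unwinding the conic-hull scaling, and then the minimax theorem plus a rescaling to produce the other alternative. The only cosmetic difference is that you prove $\neg(1)\Rightarrow(2)$ directly (rescaling the minimizer $\vec{y}^\star$ by $-1/v^\star$), whereas the paper proves the contrapositive $\neg(2)\Rightarrow(1)$; these are interchangeable given exclusivity.
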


\subsection{Game theory definitions}

We begin by defining polyhedral games, following \citet{Gordon08:No}. But first, we need to define multi-linear functions.

\begin{definition}[Multi-linear function]
    Let $V_1, \dots, V_n$ be vector spaces. A function $f : V_1 \times \dots \times V_n \to \bbR$ is said to be multi-linear if for each $p \in [n]$ and fixed $\vec{v}_{-p} \in V_{-p}$ the function $f(\vec{v}_p, \vec{v}_{-p})$ is linear in $\vec{v}_p \in V_p$. In other words, if $\nabla f(\vec{v}_{-p})$ is the gradient of $f(\vec{v}_p, \vec{v}_{-p})$ with respect to $\vec{v}_p$ when $\vec{v}_{-p}$ is fixed, then $f(\vec{v}_p, \vec{v}_{-p}) = \vec{v}_p \cdot \nabla f(\vec{v}_{-p})$.
\end{definition}

\begin{definition}[Polyhedral game]\label{defn:polyhedral_game}
    In a polyhedral game with $n$ players, every player $p \in [n]$ has a polytope\footnote{Note that despite their name, polyhedral games have strategy sets that are  \textit{polytopes}, that is, \emph{bounded} polyhedra.} strategy set $\mathcal{A}_p \subset \bbR^{d_p}$ and a multi-linear utility function $u_p : \mathcal{A}_1 \times \dots \times \mathcal{A}_n \to \bbR$
\end{definition}

Some notable examples of polyhedral games are: normal-form games, where every player has a probability simplex as their strategy set, and extensive-form games, where the strategy sets of the players are the sets of sequence-form strategies \citep{Romanovskii62:Reduction, Koller96:Efficient, vonStengel96:Efficient}. We will refer to the encoding length of the game as the \emph{size of the game}. In games of interest this is usually much smaller than holding the full utility function; for example, extensive-form games are encoded using a game tree and different classes of normal-form games can have other succinct descriptions \citep{Papadimitriou2008:Computing}.

A sub-class of polyhedral games that will be particularly useful in our paper is that of bilinear zero-sum games, which is defined below.

\begin{definition}[Bilinear zero-sum game]
    Let $\mathcal{X} \subset \bbR^M$, $\mathcal{Y} \subset \bbR^N$ be two rational polytopes. A bilinear zero-sum game is a game between two players with strategy sets $\mathcal{X}$ and $\mathcal{Y}$ such that the utility of the $\mathcal{X}$-player is
    \[
        u_1(\vec x, \vec y) = \vec{x}^\top \mat{A} \vec{y},
    \]
    for some $\mat{A} \in \bbQ^{M \times N}$,
    and the utility of the $\mathcal{Y}$-player is $u_2(\vec x, \vec y) = -u_1(\vec x, \vec y)$
\end{definition}

We can now define the notion of a $\Phi$-equilibrium, which generalizes the correlated equilibrium for arbitrary $n$-player polyhedral games and sets of strategy transformations $\Phi$. Before we do that, we first need to define the \emph{corner game} $\Gamma(G)$ of a polyhedral game $G$, following \citet{Gordon08:No, Marks2008:No}. This is the game that arises if we let the action sets of every player $p$ be equal to the vertices $V(\mathcal{A}_p)$ of the polytope strategy set of that player. Note that since $\mathcal{A}_p$ is a polytope, it will have a finite number of vertices. The utilities of this game for a player $p \in [n]$ and pure strategy profile $\vec s \in V(\mathcal{A}_1) \times \dots \times V(\mathcal{A}_n)$ are simply given by $u_p(\vec s)$. In this paper, we will denote the vertices of every strategy set in a polyhedral game as $\Pure_p = V(\mathcal{A}_p)$. We are now ready to define the $\Phi$-equilibrium.

\begin{definition}[$\Phi$-equilibrium]\label{defn:phi_equilibrium}
    Let $G$ be a polyhedral game of $n$ players and $\Phi_p$ be a set of strategy transformations $\phi_p : \mathcal{A}_p \to \mathcal{A}_p$ for each player $p \in [n]$. A $\{\Phi_p\}$-equilibrium for $G$ is a joint distribution $\mu \in \Delta(\Pure_1 \times \dots \times \Pure_n)$ on the pure strategy profiles of $\Gamma(G)$, such that for every player $p \in [n]$ and deviation $\phi \in \Phi_p$ it holds
    \[
        \Ex_{\vec s \sim \mu} [u_p(\vec s)] \geq \Ex_{\vec s \sim \mu} [u_p(\phi(\vec{s}_p), \vec{s}_{-p})].
    \]
\end{definition}

\section{A simple framework for computing equilibria in bilinear zero-sum games using good-enough-response (GER) oracles}\label{sec:eah_exact}

We begin by introducing a simple algorithmic framework (\cref{thm:exact_bilinear_framework}) for computing min-max equilibria in bilinear zero-sum games. As mentioned before, it relies on the idea of good-enough-responses. The motivation behind this is that sometimes a best-response oracle is not known, or even NP-hard to construct (as we prove in \cref{thm:efg_br_hard}). On the contrary, good-enough-responses might be a readily available primitive. For example, we will see in \cref{sec:exact_phi} that a good-enough-response oracle materializes through the use of fixed-point oracles for transformations $\phi \in \Phi$ and this enables us to devise polynomial time algorithms for computing exact $\Phi$-equilibria in polyhedral games.

Let us assume that we have a bilinear zero-sum game $\mathcal{G}(\mathcal{X}, \mathcal{Y}, \mat{A})$, where the strategy sets $\mathcal{X} \subset \bbR^M, \mathcal{Y} \subset \bbR^N$ are rational polytopes. We typically assume that $M \gg N$. Additionally, let
\[
    \texttt{OPT} = \max_{\vec x \in \mathcal{X}} \min_{\vec y \in \mathcal{Y}} \vec{x}^\top \mat{A} \vec{y},
\]
be the value of the game at equilibrium, which is known to us. \textit{In the rest of the paper we assume that $\texttt{OPT} = 0$.} This is without loss of generality because otherwise, it is possible to create a new game with this property by augmenting the vectors $\vec{x}, \vec{y}$ with an extra dimension as follows:
\[
    \left[
    \begin{array}{cc}
        \vec{x}^\top & 1
    \end{array}
    \right]
    \left[
    \begin{array}{cc}
        \mat{A} & \vec{0} \\ 
        \vec{0}^\top & -\texttt{OPT} \\ 
    \end{array}
    \right]
    \left[
    \begin{array}{c}
        \vec{y} \\
        1
    \end{array}
    \right] = \vec{x}^\top \mat{A} \vec{y} - \texttt{OPT}.
\]

The framework we present in this section is a formalization of the following observation: using the minimax theorem \citep{vNeumann1928}, we can see that the below statement
\begin{displayquote}
    (S1) Given any $\vec y \in \mathcal{Y}$ we can find some $\vec x = \vec x (\vec y) \in \mathcal{X}$ such that $\vec{x}^\top \mat{A} \vec{y} \geq 0$.
\end{displayquote}
implies the following
\begin{displayquote}
    (S2) There exists $\vec{x}^* \in \mathcal{X}$ such that $(\vec{x}^*)^\top \mat{A} \vec{y} \geq 0$ for all $\vec{y} \in \mathcal{Y}$.
\end{displayquote}

\noindent This is because, the first statement (S1) is equivalent to
\[
    \min_{\vec y} \max_{\vec x} \vec{x}^\top \mat{A} \vec{y} \geq 0,
\]
while the second statement (S2) is equivalent to
\[
    \max_{\vec x} \min_{\vec y} \vec{x}^\top \mat{A} \vec{y} \geq 0.
\]

We are interested in the following question; \emph{``Is there an efficient algorithm that when given access to an oracle for (S1), it constructs a solution $\vec{x}^*$ for (S2) represented as a mixture of a small number of oracle responses?"}.

\subsection{Good-Enough-Response (GER) oracle}

We begin by formally defining the oracle we presented previously, which we coin a Good-Enough-Response (GER) oracle. It is defined as follows:

\begin{center}
\vspace{5mm}
\begin{mdframed}
    \texttt{GER($\vec{y}$):}\\
    \hspace*{5mm}\texttt{return $(\vec{x}, \vec{x}^\top \mat{A}) \in \mathcal{X} \times \bbQ^N$ s.t. $\vec{x}^\top \mat{A} \vec{y} \geq \texttt{OPT} = 0$}
\end{mdframed}
\vspace{5mm}
\end{center}

\noindent where $\vec y \in \mathcal{Y} \subset \bbR^N$, and $\texttt{OPT} = 0$ as was discussed earlier. Note that this is not a best-response oracle, because it does not return an $\vec{x} \in \mathcal{X}$ that maximizes the utility of the max-player. Rather, it suffices to return a ``good enough response'', hence the name.

In fact, our algorithms will often need to query a GER oracle for $\vec{y}' \in \bbR_+ \mathcal{Y}$ and not just for vectors in $\mathcal{Y}$. This however is not a problem because it suffices to find any $\vec{y} = \vec{y}' / \alpha$ for some $\alpha > 0$ and $\vec{y} \in \mathcal{Y}$ and then query $\texttt{GER}(\vec{y})$ instead. To find such a $\vec{y}$ efficiently we can again, without loss of generality, assume that all vectors $\vec{y} \in \mathcal{Y}$ are augmented with an extra dimension (call it $\vec{y}[\emptyseq]$) such that $\vec{y}[\emptyseq] = 1$ for all $\vec{y} \in \mathcal{Y}$. Then we can find the desired scaling factor immediately because $\vec{y}'[\emptyseq] = \alpha$ if and only if $\vec{y}' = \alpha \vec{y}$ for $\vec{y} \in \mathcal{Y}$.

In addition to a good-enough-response oracle, our algorithm also requires a separation oracle $\texttt{SEP}_{\mathcal{Y}}$ for the polytope $\mathcal{Y}$, which can be easily converted to a separation oracle for $\bbR_+ \mathcal{Y}$ by the same ``augmenting'' argument as before. Combining these two, we can make the final separation oracle (\cref{algo:sep}) that is needed to execute the ellipsoid method on \eqref{eq:dual_cp}, as presented later. Specifically, if $\vec{y} \notin \bbR_+ \mathcal{Y}$ then we simply return a separating hyperplane via $\texttt{SEP}_{\bbR_+ \mathcal{Y}}$, else we return a good-enough-response from \texttt{GER}.

\begin{algorithm}[ht]
    \caption{Separation oracle for the ellipsoid method on \eqref{eq:dual_cp}}\label{algo:sep}
    \KwIn{Separation oracle $\texttt{SEP}_{\bbR_+ \mathcal{Y}}$ for $\bbR_+ \mathcal{Y}$, and a good-enough-response oracle \texttt{GER}.}
    \KwOut{A separating hyperplane $\vec{c}$ for $\vec y$ in \eqref{eq:dual_cp}, and a corresponding vector $\vec x$ from \texttt{GER}, if it exists.}
    
    \eIf{$\texttt{SEP}_{\bbR_+ \mathcal{Y}}$ deems that $\vec y$ is in $\bbR_+ \mathcal{Y}$}{
        Set $(\vec{x}, \vec{c})$ to the output $(\vec{x}, \mat{A}^\top \vec{x})$ of $\texttt{GER}(\vec{y})$\;
    }{
        Set $\vec c$ to the separating hyperplane output by $\texttt{SEP}_{\bbR_+ \mathcal{Y}}$\;
        $\vec{x} = \emptyseq$\;
    }
\end{algorithm}

\subsection{The framework}

Our goal is to compute an $\vec x \in \mathcal{X}$ that is an optimal (min-max) strategy for the max-player. Equivalently, we seek to find a solution to the following linear program:
\[
    \tag{$P$}\label{eq:primal_cp}\text{find}\ &\vec{x}\\
    \text{s.t.}\ &\min_{\vec y \in \mathcal{Y}} \vec{x}^\top \mat{A} \vec{y} \geq 0\\
        &\vec{x} \in \mathcal{X}
\]

This is an LP with $M$ variables which is typically assumed to be much greater (even super-exponentially greater) than $N$. When faced with this situation, one might want to attempt to directly solve the dual of \eqref{eq:primal_cp} using the ellipsoid method. However, this would require a proper separation oracle for the dual problem, which corresponds to a linear optimization oracle, or at least a best-response oracle. But as we explained, the oracle access we have is weaker.

Instead, we focus on the below linear program. Note that for any $\vec{y} \in \bbR_+ \mathcal{Y}$, \texttt{GER($\vec{y}$)} should always return an $\vec{x} \in \mathcal{X}$ such that $(\vec{x}^\top \mat{A}) \vec{y} \geq 0$, which is a violated constraint of the LP. Thus, \texttt{GER} together with a separation oracle for $\bbR_+ \mathcal{Y}$ (combined as in \cref{algo:sep}) can be used as a separation oracle for this LP.
\[
    \tag{$D$}\label{eq:dual_cp} \text{find}\ &\vec y\\
    \text{s.t.}\ &\max_{\vec x \in \mathcal{X}} \vec{x}^\top \mat{A} \vec{y} \leq -1\\
        &\vec{y} \in \bbR_+ \mathcal{Y}
\]
By \cref{lem:farkas} and the fact that \eqref{eq:primal_cp} is feasible, it immediately follows that \eqref{eq:dual_cp} must be infeasible. Despite the infeasibility, and following the ``Against Hope'' step of \citet{Papadimitriou2008:Computing}, we execute the ellipsoid method on \eqref{eq:dual_cp} using \cref{algo:sep} as a separation oracle. The ellipsoid method will run for a number $L = \poly(N)$ of steps and then conclude that \eqref{eq:dual_cp} is infeasible. Let $\vec{x}_1, \dots, \vec{x}_L$ be the response vectors returned by \texttt{GER} during this process. We now consider a ``compressed" version of the previous LP that only uses vectors $\vec{x}$ from the convex hull $\co\{\vec{x}_k\}$ of these responses.
\[
    \tag{$D'$}\label{eq:dual_compressed_cp} \text{find}\ &\vec y\\
    \text{s.t.}\ &\max_{\vec x \in \co\{\vec{x}_k\}} \vec{x}^\top \mat{A} \vec{y} \leq -1\\
        &\vec{y} \in \bbR_+ \mathcal{Y}
\]
We argue that this LP must also be infeasible; the ellipsoid method is a deterministic algorithm and if we execute it on \eqref{eq:dual_compressed_cp} it will go through the same sequence of candidate points $\vec{y}_k$, to which we can respond with the same sequence of separating hyperplanes as before. These hyperplanes will still be valid for \eqref{eq:dual_compressed_cp} because all of the response vectors we used previously exist in $\co\{\vec{x}_k\}$.

Now, using \cref{lem:farkas} and the fact that \eqref{eq:dual_compressed_cp} is infeasible, it follows that the LP shown below must be feasible.
\[
    \tag{$P'$}\label{eq:primal_compressed_cp}\text{find}\ &\vec{x}\\
    \text{s.t.}\ &\min_{\vec y \in \mathcal{Y}} \vec{x}^\top \mat{A} \vec{y} \geq 0\\
        &\vec{x} \in \co\{\vec{x}_k\}
\]
This is a ``compressed'' version of \eqref{eq:primal_cp}, because now every vector $\vec{x} \in \co\{\vec{x}_k\}$ can be represented as a vector of size $L$ that corresponds to a convex combination of the response vectors $\vec{x}_1, \dots, \vec{x}_L$.
Finally, since \eqref{eq:primal_compressed_cp}  is an LP with only a polynomial number of variables, we can solve it in polynomial time using any suitable LP solver (such as the ellipsoid method again).
This will clearly be a valid feasible solution for our initial LP \eqref{eq:primal_cp}, because $\co\{\vec{x}_k\} \subset \mathcal{X}$.

The full algorithm is shown below, in \cref{alg:eah}. Note that in reality we only need to use the LPs \eqref{eq:dual_cp} and \eqref{eq:primal_compressed_cp}. The rest were used as intermediate steps for the presentation of the algorithm.

\begin{algorithm}[ht]
    \SetAlgoNoLine
    \caption{Ellipsoid Against Hope for bilinear zero-sum games}\label{alg:eah}
	\KwIn{Separation oracle $\texttt{SEP}_{\bbR_+ \mathcal{Y}}$ for $\bbR_+ \mathcal{Y}$, and a good-enough-response oracle \texttt{GER}.}
	\KwOut{A sparse solution $\vec{x}^*$ of \eqref{eq:primal_cp} represented as a mixture of \texttt{GER} oracle responses.}

    Execute the ellipsoid method on \eqref{eq:dual_cp}, using \cref{algo:sep} as a separation oracle\; %
    Create \eqref{eq:primal_compressed_cp} using the response vectors and compute a feasible solution $\vec{x}^*$\;
\end{algorithm}

\begin{theorem}\label{thm:exact_bilinear_framework}
    If the following hold
    \begin{enumerate}
        \item $\mathcal{X} \subset \bbR^M, \mathcal{Y} \subset \bbR^N$ are rational polytopes,

        \item we have access to a separation oracle $\texttt{SEP}_{\mathcal{Y}}$ for $\mathcal{Y}$ and a good-enough-response oracle \texttt{GER},
        
        \item the facet-complexity of $\mathcal{Y}$ is at most $\varphi$,
        
        \item the encoding length of $\vec{x}^\top \mat{A}$ is at most $\varphi$ for all \texttt{GER} oracle responses and all vertices of $\mathcal{X}$,
    \end{enumerate}
    then \cref{alg:eah} runs in $\poly(N, \varphi)$ time, performs $L = \poly(N, \varphi)$ oracle calls, and computes an exact solution $\vec{x}^*$ of \eqref{eq:primal_cp} that is a mixture of at most $N$ oracle responses. In particular, the encoding length of $\vec{x}^*$ depends polynomially on the encoding length of the \texttt{GER} oracle responses.
\end{theorem}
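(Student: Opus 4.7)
The plan is to analyze \cref{alg:eah} in three stages: first bound the runtime of the ellipsoid method on \eqref{eq:dual_cp}, then certify feasibility and polynomial solvability of \eqref{eq:primal_compressed_cp}, and finally extract the sparsity and bit-complexity bounds for $\vec x^*$. The conceptual scaffolding is already in place from the discussion preceding the theorem: the ``Against Hope'' logic linking \eqref{eq:dual_cp}/\eqref{eq:dual_compressed_cp} to \eqref{eq:primal_cp}/\eqref{eq:primal_compressed_cp} through the generalized Farkas lemma (\cref{lem:farkas}).

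For the first stage I would invoke the standard Gr\"otschel--Lov\'asz--Schrijver analysis of the ellipsoid method. Hypothesis~3 bounds the facet-complexity of $\mathcal Y$, and hence of $\bbR_+\mathcal Y$ after the $\vec y[\emptyseq]=1$ homogenization, by $\poly(\varphi)$; hypothesis~4 bounds the bit-length of each hyperplane $\mat A^\top \vec x$ returned by $\texttt{GER}$ by $\varphi$; and the assumed $\texttt{SEP}_{\mathcal Y}$ produces hyperplanes of bit-length $\poly(\varphi)$. Every hyperplane produced by \cref{algo:sep} therefore has bit-length $\poly(\varphi)$, and the ellipsoid method on \eqref{eq:dual_cp} either terminates in $L=\poly(N,\varphi)$ iterations or returns a feasible $\vec y$. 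Since \eqref{eq:primal_cp} is feasible (witnessed by any max--min equilibrium strategy, using $\texttt{OPT}=0$), \cref{lem:farkas} forces \eqref{eq:dual_cp} to be infeasible, so the method must terminate with an infeasibility certificate after at most $L$ GER queries.

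Next I would argue that \eqref{eq:primal_compressed_cp} is feasible and efficiently solvable. Because the ellipsoid method is deterministic and each recorded response $\vec x_k$ still belongs to $\co\{\vec x_k\}$, re-running the method on \eqref{eq:dual_compressed_cp} with the identical sequence of separating hyperplanes traces the same trajectory and again certifies infeasibility; a second application of \cref{lem:farkas} then yields feasibility of \eqref{eq:primal_compressed_cp}. To solve the latter in polynomial time, I would observe that it has $L$ variables $\vlam$ and that its constraint $\min_{\vec y\in\mathcal Y}(\sum_k\lambda_k\vec x_k)^\top\mat A\vec y\geq 0$ admits a polynomial-time separation oracle: given a candidate $\vlam$, compute $\vec c=\sum_k\lambda_k(\mat A^\top\vec x_k)\in\bbR^N$ (an $N$-dimensional vector with polynomial bit-complexity by hypothesis~4) and minimize $\vec c^\top\vec y$ over $\mathcal Y$ via the equivalence of separation and optimization applied to $\texttt{SEP}_{\mathcal Y}$. \Cref{lem:polytope_linear_program} then guarantees that the feasible region of \eqref{eq:primal_compressed_cp} has facet-complexity $\poly(N,\varphi)$, so a final ellipsoid invocation returns a vertex solution $\vlam^*$ in $\poly(N,\varphi)$ time.

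For the sparsity claim, I would view the feasible region of \eqref{eq:primal_compressed_cp} in $\vlam$-space as the polytope cut out by $\lambda_k\geq 0$, $\sum_k\lambda_k=1$, and the cone constraint $\sum_k\lambda_k\vec c_k\in\cC$, where $\cC=\{\vec c\in\bbR^N:\vec c^\top\vec y\geq 0\,\forall\vec y\in\mathcal Y\}$ and $\vec c_k=\mat A^\top\vec x_k$. A Carath\'eodory-type rank count at a vertex $\vlam^*$ bounds the support size by the number of linearly independent facets of $\cC$ active at $\sum_k\lambda^*_k\vec c_k$, which is at most $N$, yielding a mixture of at most $N$ oracle responses. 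The polynomial bit-complexity of $\vec x^*$ then follows from the standard rational-vertex bound for LPs of input bit-length $\poly(N,\varphi)$, combined with the polynomial bit-length of each $\vec x_k$. The main subtle point I anticipate is the faithful ``replay'' argument used to transfer infeasibility from \eqref{eq:dual_cp} to \eqref{eq:dual_compressed_cp}: one must verify that each recorded separating hyperplane genuinely remains a valid separator for the compressed LP at the corresponding ellipsoid iterate, which is true precisely because the hyperplanes returned by \cref{algo:sep} originate from $\co\{\vec x_k\}\subseteq\mathcal X$.
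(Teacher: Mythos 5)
Your proposal is correct and follows essentially the same route as the paper's proof: run the ellipsoid method on \eqref{eq:dual_cp} with the combined separation/\texttt{GER} oracle, derive its infeasibility from the generalized Farkas lemma (\cref{lem:farkas}), use the deterministic replay argument to transfer infeasibility to \eqref{eq:dual_compressed_cp}, apply Farkas again to get feasibility of \eqref{eq:primal_compressed_cp}, and extract an at-most-$N$-sparse basic feasible solution. The only cosmetic difference is that you solve \eqref{eq:primal_compressed_cp} via a separation oracle obtained from linear optimization over $\mathcal{Y}$, whereas the paper invokes \cref{lem:polytope_linear_program} to bound the explicit facet description of the compressed LP; both give the same $\poly(N,\varphi)$ bound.
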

\begin{proof}
First, we can assume without loss of generality that there exists a dimension $\emptyseq$ in all $\vec{y} \in \mathcal{Y}$ such that $\vec{y}[\emptyseq] = 1$. Otherwise, it is always possible to augment these vectors with an extra dimension before applying the next steps of the algorithm. This allows us to convert the given separation oracle $\texttt{SEP}_\mathcal{Y}$ into a new separation oracle $\texttt{SEP}_{\bbR_+ \mathcal{Y}}$ for $\bbR_+ \mathcal{Y}$ that we can then use to construct the general oracle of \cref{algo:sep}.

The first step of the algorithm is to execute the ellipsoid method on \eqref{eq:dual_cp}. Using the assumptions of the theorem in \cref{lem:polytope_linear_program} with $\mathcal{U} = \bbR_+ \mathcal{Y}$ and $\mathcal{Q} = \mathcal{X}$, it follows that \eqref{eq:dual_cp} is a polytope and has facet-complexity at most $\varphi$. Additionally, by the fact that \eqref{eq:primal_cp} is feasible (it has an equilibrium with $\texttt{OPT} = 0$) and by \cref{lem:farkas} it follows that \eqref{eq:dual_cp} must be infeasible.

To execute the ellipsoid method on \eqref{eq:dual_cp} would then mean, in the language of \citet{Grotschel1993:Geometric}, to solve the Strong Nonemptiness Problem for \eqref{eq:dual_cp} using the strong separation oracle of \cref{algo:sep}. To this end, we use the algorithm from Theorem 6.4.1 of \citet{Grotschel1993:Geometric}. This algorithm works for any polyhedron, even if it is not bounded or full-dimensional, as might be the case here. To do that, it might execute the central-cut ellipsoid method more than once, but never more than $N$ times. In our case, we already know that \eqref{eq:dual_cp} is infeasible and thus, the algorithm terminates after $N$ executions of the central-cut ellipsoid and concludes that \eqref{eq:dual_cp} is infeasible.

Since the central-cut ellipsoid method is an oracle-polynomial algorithm that is executed $N$ times in polyhedra of facet-complexity at most $\varphi$, the whole process runs in polynomial time and performs a polynomial number of separation oracle calls. To calculate the exact number $L$ of oracle calls, we note that the algorithm in \citet[Theorem 6.4.1]{Grotschel1993:Geometric} initializes the central-cut ellipsoid method with
\[
    R = 2^{O(N^2 \varphi)} \ \text{ and }\  \epsilon = 2^{-O(N^5 \varphi)},
\]
while the central-cut method terminates in $O(N \log(1/\epsilon) + N^2 \log R)$ iterations \citep[Theorem 3.2.1]{Grotschel1993:Geometric}. Combining these with the fact that the central-cut ellipsoid method is repeated $N$ times, we get that the number of oracle calls is $L = O(N^7 \varphi)$.

Next, note that \eqref{eq:dual_compressed_cp} is comprised of constraints coming from \texttt{GER} oracle responses, which by \cref{lem:polytope_linear_program} gives that the facet-complexity of \eqref{eq:dual_compressed_cp} must also be at most $\varphi$. By going through the same process as before, the algorithm will reach the same conclusion after executing the central-cut ellipsoid method $N$ times; \eqref{eq:dual_compressed_cp} is infeasible.

Finally, by the infeasibility of \eqref{eq:dual_compressed_cp} and \cref{lem:farkas}, it follows that \eqref{eq:primal_compressed_cp} must be feasible. An equivalent way to express \eqref{eq:primal_compressed_cp} is
\[
    \text{find}\ &\vec{a}\\
    \text{s.t.}\ &\min_{\vec y \in \mathcal{Y}} \vec{a}^\top (\mat{X}^\top \mat{A}) \vec{y} \geq 0\\
    &\vec{a} \in \Delta^L,
\]
where $\Delta^L$ is the $L$-dimensional simplex and $\mat{X} = [\vec{x}_1 \mid \dots \mid \vec{x}_L]$ is a matrix with the \texttt{GER} oracle responses as its columns.

Applying \cref{lem:polytope_linear_program} for $\mathcal{U} = \Delta^L$ and $\mathcal{Q} = \mathcal{Y}$ we conclude that \eqref{eq:primal_compressed_cp} describes a polytope
\[
    \mathcal{P} = \left\{ \vec{a} \in \Delta^L \bigm\vert \min_{\vec{y} \in \mathcal{Y}} \vec{a}^\top (\mat{X}^\top \mat{A}) \vec{y} \geq 0 \right\}
\]
of encoding length at most $L \poly(\varphi)$. This is because, for any vertex $\hat{\vec{y}} \in V(\mathcal{Y})$, the inequality $\vec{a}^\top (\mat{X}^\top \mat{A}) \hat{\vec{y}} \geq 0$ has $L$ coefficients, each of which having encoding length $\poly(\varphi)$.
This can be solved in polynomial time by any known linear programming method. Even better, it is possible to compute a basic feasible solution of this LP, which will have at most $N$ non-zero entries and thus the final solution $\vec{x}^* = \mat{X} \vec{a}$ will be a mixture of at most $N$ oracle responses.
\end{proof}

Note that since we have assumed that $M \gg N$, it would not make sense for the final solution $\vec{x}^*$ to have encoding length $\poly(M)$, as this would invalidate the whole algorithm. In order for the solution to make sense, the \texttt{GER} oracle must only give responses with low encoding length. This is exactly the case in \cref{sec:exact_phi}, where $M$ is a doubly-exponential quantity in the size of the problem, while the \texttt{GER} responses are vectors with only one non-zero entry.

\section{Computing linear $\Phi$-equilibria in polynomial time}\label{sec:exact_phi}

We have seen in \cref{sec:eah_exact} how one can compute exact min-max equilibria using good-enough-response (\texttt{GER}) oracles. Now it is time to apply this machinery in the problem of computing \emph{exact} $\Phi$-equilibria in polyhedral games. Crucially, the factor that enables us to utilize the framework of \cref{sec:eah_exact} is the existence of an efficient \texttt{GER} oracle, which effectively boils down to constructing a product distribution consisting of fixed-points for the strategies of every player of the game.

Let $G$ be any polyhedral game (\cref{defn:polyhedral_game}) with $n$ players and strategy sets $\mathcal{A}_p \subset \bbR^{d_p}$ for $p \in [n]$. In this section we apply the framework we developed previously to construct an algorithm that computes an exact $\Phi$-equilibrium of $G$ in polynomial time when $\Phi$ is a polytope containing valid linear transformations from polyhedral strategies to polyhedral strategies. Notable examples of sets with these properties are the trigger deviations used for EFCE \citep{Farina22:Simple}, and the linear-swap deviations used for LCE \citep{Farina2023:Polynomial} in extensive-form games.

The general idea of our construction is that of the existence proof by \citet{Hart1989:Existence} that casts the problem of $\Phi$-equilibrium computation as one of computing a min-max equilibrium in a two-player zero-sum meta-game between a ``Correlator'', who acts upon the simplex of all pure strategy profiles, and a ``Deviator'', whose actions correspond to deviations for every player. We call this a \emph{Correlator-Deviator game}.

To make this idea applicable to polyhedral games, we generalize it as follows. We define a bilinear zero-sum meta-game with strategy sets $\mathcal{X}, \mathcal{Y}$ for the two players, where $\mathcal{X}$ is the set of all joint distributions over strategy profiles, $\cX = \Delta(\Pure_1 \times \dots \times \Pure_n)$ (hence, a polytope) and $\mathcal{Y}$ is the Cartesian product of $\Phi_p$ for all players $p$, $\mathcal{Y} = \Phi_1 \times \dots \times \Phi_n$, which is a convex set -- and in our case, a polytope.

We remark here that linear transformations $\phi_p$ can be represented using a matrix $\mat{B}_p$ such that $\phi_p(\vec{x}_p) = \mat{B}_p \vec{x}_p$. Thus, when we say that $\Phi_p$ is a polytope, it means that there exists a system of inequalities that can describe the entries of the corresponding matrix $\mat{B}_p$ for every $\phi_p \in \Phi_p$. For notational convenience, we will interchangeably use $\Phi_p$ to denote either the set of transformation functions, or a polytope describing the vectors (flattened $\mat{B}_p$ matrices) that correspond to transformations. In any event, it should not matter which of the two representations we have, because they are completely equivalent.

The utility matrix $\mat{U}$ of this meta-game is shown below. Specifically, it has one row for each pure strategy profile $\vec s \in \Pure_1 \times \dots \times \Pure_n$, and one column for each tuple $j = (p, a, b)$, where $a, b \in [d_p]$ are used as indices over strategy vectors $\vec{s}_p \in \mathcal{A}_p$. Finally, the final expression always needs to have a quantity ($\sum_p \Ex_{s \sim \vec x} [u_p(s)]$) that is independent of the value of $\vec{y}$. To achieve this we can use a trick similar to the one used to make $\texttt{OPT} = 0$ in \cref{sec:eah_exact} by augmenting vectors $\vec{y} \in \mathcal{Y}$ with an extra dimension $\emptyseq$ such that $\vec{y}[\emptyseq] = 1$ always holds. Then we have\footnote{As an aside, we are slightly abusing the notation here because $u_p$ is defined as a function with domain $\mathcal{A}_1 \times \dots \times \mathcal{A}_n$ and it is not guaranteed that $\vone_b \in \mathcal{A}_p$. However, this is not a problem due to the multi-linearity of the utilities that basically makes $u_p(\cdot, \vec{s}_{-p})$ a vector of size $d_p$.}
\[
    \mat{U}_{s j} = \mleft\{ \begin{array}{ll}
        \sum_p u_p(s), & j = \emptyseq \\
        -\vec{s}_p[a] u_p(\vone_b, \vec{s}_{-p}), & \text{otherwise}
    \end{array} \mright.
\]
where $\vone_b$ denotes the vector having all $0$, apart from index $b$, which is $1$.
Note that the number of rows of $\mat{U}$ might be doubly-exponential (exponential both in the number of players and the dimension of the polyhedral strategies), which is in contrast to the original Ellipsoid Against Hope algorithm that only allowed a number of rows exponential in the number of players.

\begin{lemma}
    Let $G$ be a polyhedral game with pure strategy set $\Pure_p$ for every player $p \in [n]$. Additionally, let $\Phi_p$ be a set of linear transformations for every $p \in [n]$.
    If $\vec{x} \in \mathcal{X} = \Delta(\Pure_1 \times \dots \times \Pure_n)$ and $\vec{y} = (\phi_1, \dots, \phi_n) \in \mathcal{Y} = \Phi_1 \times \dots \times \Phi_n$ then
    \[
        \vec{x}^\top \mat{U} \vec{y} = \sum_{p} \Ex_{s \sim \vec{x}}[u_p(s) - u_p(\phi_p(\vec{s}_p), \vec{s}_{-p})].
    \]
\end{lemma}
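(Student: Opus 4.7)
The plan is a direct unpacking of the definition of $\mat{U}$, splitting the sum $\vec{x}^\top \mat{U}\vec{y} = \sum_{s,j}\vec{x}[s]\,\mat{U}_{s,j}\,\vec{y}[j]$ according to whether the column index $j$ is the augmented coordinate $\emptyseq$ or a triple $(p,a,b)$, and then recognizing each of the two resulting pieces as the expected utility and expected deviation utility, respectively.

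First I would handle the $j=\emptyseq$ piece. Because $\vec{y}[\emptyseq]=1$ by the augmentation convention and $\mat{U}_{s,\emptyseq} = \sum_p u_p(s)$, this contributes
\[
\sum_{s} \vec{x}[s] \sum_p u_p(s) \;=\; \sum_p \Ex_{s \sim \vec{x}}[u_p(s)],
\]
which is exactly the first term in the claimed right-hand side. The rest of the argument is to show that the remaining sum over $j=(p,a,b)$ produces $-\sum_p \Ex_{s\sim\vec{x}}[u_p(\phi_p(\vec{s}_p),\vec{s}_{-p})]$.

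For this, I would use the identification between a linear deviation $\phi_p\in\Phi_p$ and its matrix representation $\mat{B}_p$, with the coordinates of $\vec{y}$ corresponding (up to the fixed flattening convention used to define the columns of $\mat{U}$) to the entries $\mat{B}_p[b,a]$. The key algebraic step is multi-linearity of $u_p$: fixing $\vec{s}_{-p}$, the function $\vec{v}\mapsto u_p(\vec{v},\vec{s}_{-p})$ is linear in $\vec{v}\in\bbR^{d_p}$, so it is determined by its gradient, whose coordinates are exactly $u_p(\vone_b,\vec{s}_{-p})$. Consequently, for any $\vec{s}_p\in\mathcal{A}_p$,
\[
u_p(\phi_p(\vec{s}_p),\vec{s}_{-p}) \;=\; \sum_b [\mat{B}_p\vec{s}_p]_b\, u_p(\vone_b,\vec{s}_{-p}) \;=\; \sum_{a,b} \mat{B}_p[b,a]\,\vec{s}_p[a]\, u_p(\vone_b,\vec{s}_{-p}).
\]
Plugging this into the $j=(p,a,b)$ part of $\vec{x}^\top\mat{U}\vec{y}$ and summing over $a,b$ collapses the inner sum exactly to $-u_p(\phi_p(\vec{s}_p),\vec{s}_{-p})$, after which summing over $p$ and then taking the expectation under $\vec{x}$ yields $-\sum_p \Ex_{s\sim\vec{x}}[u_p(\phi_p(\vec{s}_p),\vec{s}_{-p})]$.

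Adding the two pieces gives the stated identity. There is no real obstacle here beyond keeping the indexing between $\vec{y}$ and the matrix $\mat{B}_p$ consistent with the definition of $\mat{U}_{s,j}$; the substance of the lemma is simply the combination of (i) the augmentation trick that makes the $\sum_p u_p(s)$ term $\vec{y}$-independent, and (ii) the multi-linearity of $u_p$ paired with the linearity of each $\phi_p$, which together allow the double sum over $(a,b)$ to be reassembled into $u_p(\phi_p(\vec{s}_p),\vec{s}_{-p})$.
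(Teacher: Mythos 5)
Your proposal is correct and follows essentially the same route as the paper's proof: both expand $\vec{x}^\top \mat{U}\vec{y}$ directly from the definition of $\mat{U}$, use the matrix representation $\mat{B}_p$ of $\phi_p$ together with the multi-linearity of $u_p$ to reassemble the double sum over $(a,b)$ into $u_p(\phi_p(\vec{s}_p),\vec{s}_{-p})$, and rely on the augmented coordinate $\vec{y}[\emptyseq]=1$ for the $\sum_p u_p(s)$ term. The only cosmetic difference is that you split the column sum into the $\emptyseq$ and $(p,a,b)$ pieces up front, whereas the paper carries both along in a single display.
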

\begin{proof}
    As we discussed, each linear transformation $\phi_p$ can be viewed as a $d_p \times d_p$ transformation matrix $\mat{B}_p$. We denote the matrix entries with $\mat{B}_p[b, a]$. In particular, if $\vec{s}_p' = \phi_p(\vec{s}_p)$, we have $\vec{s}_p'[b] = \sum_a \mat{B}_p[b, a] \vec{s}_p[a]$. Then for any $\vec{x} \in \cX$ and $\vec{y} = (\phi_1, \dots, \phi_n) \in \mathcal{Y}$ we have
    \[
        \vec{x}^\top \mat{U} \vec{y} &= \sum_{s} x_s \sum_{p} \left( u_p(s) - \sum_{a \in [d_p]} \sum_{b \in [d_p]} \mat{B}_p[b, a] \vec{s}_p[a] u_p(\vone_b, \vec{s}_{-p}) \right)\\
        &= \sum_{p} \sum_{s} x_s \left( u_p(s) - u_p\left(\sum_{b \in [d_p]} \vone_b \sum_{a \in [d_p]} \mat{B}_p[b, a] \vec{s}_p[a], \vec{s}_{-p} \right) \right)\\
        &= \sum_{p} \sum_{s} x_s \left( u_p(s) - u_p\left(\sum_{b \in [d_p]} \vone_b \vec{s}_p'[b], \vec{s}_{-p} \right) \right)\\
        &= \sum_{p} \sum_{s} x_s \left( u_p(s) - u_p(\phi_p(\vec{s}_p), \vec{s}_{-p}) \right)\\
        &= \sum_{p} \Ex_{s \sim \vec{x}}[u_p(s) - u_p(\phi_p(\vec{s}_p), \vec{s}_{-p})],
    \]
    where in the second equality we have used the multi-linearity of the utilities.
\end{proof}

It is now evident that our goal is to compute a joint distribution that is a solution to the following linear program
\[
    \numberthis{eq:phi_meta_game} \text{find}\ & \vec{x}\\
    \text{s.t.}\ &\min_{\vec{y} \in \mathcal{Y}} \vec{x}^\top \mat{U} \vec{y} \geq 0 \\
        &\vec{x} \in \cX
\]
Observe that this is slightly different from the required non-negativity in \cref{defn:phi_equilibrium} in that it suffices for the minimum of the \emph{sum} of expectations to be non-negative. In general this might not necessarily mean that the individual expectations are non-negative, but we can assume without loss of generality that the identity transformation is always a valid transformation, because otherwise we can replace each $\Phi_p$ with $\co\{\Phi_p \cup \{\mat{I}\}\}$ which remains a rational polytope and admits a separation oracle when $\Phi_p$ has a separation oracle. Then, we have enough degrees of freedom to minimize over all individual expectations. Alternatively, we could have also defined $\mathcal{Y}$ to be
\[
    \mathcal{Y} = \co\left\{ 
        \begin{pmatrix}
            \Phi_1 \\
            0 \\
            \vdots \\
            0
        \end{pmatrix} \cup
        \begin{pmatrix}
            0 \\
            \Phi_2 \\
            \vdots \\
            0
        \end{pmatrix} \cup \dots \cup
        \begin{pmatrix}
            0 \\
            0 \\
            \vdots \\
            \Phi_n
        \end{pmatrix}
    \right\}
\]
to explicitly create constraints for each $\phi_p \in \Phi_p$ for all $p \in [n]$ while keeping everything else $0$. In the analysis that follows we focus on the first case where $\mathcal{Y}$ is the Cartesian product of the sets of transformations, but the same arguments hold in  both cases.

The LP \eqref{eq:phi_meta_game} respects exactly the structure of \eqref{eq:primal_cp} that our min-max framework can handle. The only component that we need to get a polynomial-time algorithm is to have an efficient good-enough-response oracle \texttt{GER}. Specifically, for any valid $\vec{y} \in \mathcal{Y}$, we need to respond with an $\vec{x}$ such that $\vec{x}^\top \mat{U} \vec{y} \geq 0$. Similar to the original Ellipsoid Against Hope algorithm \citep{Papadimitriou2008:Computing} and based on the observation by \citet{Hart1989:Existence}, the important insight that allows us to construct an efficient oracle and uncover sparse solutions is that we can always find such an $\vec{x}$ that is a product distribution. Note that since we can use polyhedral strategies to represent the marginals, it follows that the product distribution $\vec{x}$ requires space that is only linear in the game size.

Before we present the Lemma for the \texttt{GER} oracle, we give an important assumption regarding a property that a game necessarily has to have to enable an efficient implementation of the \texttt{GER} oracle.

\begin{assumption}[Polynomial utility gradient property]\label{asmpt:utility_gradient}
    Given a product distribution $\vec{x} \in \Delta(\Pure_1 \times \dots \times \Pure_n)$, it is possible to compute the value of
    \[
        \vec{g}_p(\vec{x}_{-p}) = \Ex_{\vec{s}_{-p} \sim \vec{x}_{-p}}[\nabla u_p (\vec{s}_{-p})]
    \]
    for all players $p \in [n]$ in polynomial time in the encoding length of $\vec{x}$ and the size of the game.
\end{assumption}

This assumption generalizes the polynomial expectation property defined in \citet{Papadimitriou2008:Computing} to more general, polyhedral games. In particular, if we have a normal-form game, the polynomial expectation property amounts to computing $\vec{g}_p(\vec{x}_{-p}) \cdot \vec{x}_p$ for a product distribution $\vec{x}$.

\begin{remark}\label{rmrk:polynomial_type}
    \citet{Papadimitriou2008:Computing} also defined a second property that is required for efficient computation, called the ``polynomial type'' property. Even though our algorithm does \emph{not} require this property, a variant of it is implicit in the fact that the complexity of the algorithm depends polynomially in the number of players and the dimension of every player's strategy set $\mathcal{A}_p$. However, this relaxation is what allows our algorithm to handle much broader classes of games, such as the extensive-form games that do not have the polynomial type property.
\end{remark}

Now we are ready to present the good-enough-response oracle that will allow us to develop an efficient algorithm. As a general backbone, this Lemma follows the constructive proof that \citet{Papadimitriou2008:Computing} did for the CE existence result of \citet{Hart1989:Existence} and, crucially, it produces pure strategies using the idea of \citet{Jiang2015:Polynomial}. 

\begin{lemma}[\texttt{GER} oracle for $\Phi$-equilibria]\label{lem:ger_phi_equil}
    For every $\vec{y} \in \mathcal{Y} = \Phi_1 \times \dots \times \Phi_n$ there exists a pure strategy profile $s \in \Pure_1 \times \dots \times \Pure_n$ such that $\vone_s^\top \mat{U} \vec{y} \geq 0$. Furthermore, such a strategy profile alongside with the vector $\vone_s^\top \mat{U}$ can be computed efficiently, provided that the game satisfies the polynomial utility gradient property (\cref{asmpt:utility_gradient}) and there exists a polynomial-time separation oracle for every $\mathcal{A}_p$.
\end{lemma}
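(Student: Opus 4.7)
The plan is a two-stage argument: first construct a product distribution $\vec{x}^\star \in \mathcal{X}$ that witnesses $(\vec{x}^\star)^\top \mat{U} \vec{y} = 0$ by a fixed-point argument, and then derandomize it player-by-player into a pure profile that preserves the inequality.

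\textbf{Stage 1 (product distribution from fixed points).} For each player $p$, the deviation $\phi_p$ is represented by a rational matrix $\mat{B}_p$ and maps the compact convex set $\mathcal{A}_p$ into itself, so Brouwer's fixed-point theorem guarantees a fixed point $\vec{x}_p^\star \in \mathcal{A}_p$ of $\phi_p$. Algorithmically, the set of fixed points is exactly the linear feasibility region $\{\vec{x}_p \in \mathcal{A}_p : (\mat{I}-\mat{B}_p)\vec{x}_p = \vec{0}\}$, which is nonempty by Brouwer and admits a rational solution of polynomial bit-complexity computable in polynomial time via the ellipsoid method, using the assumed separation oracle for $\mathcal{A}_p$ augmented with the explicit affine constraints. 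Setting $\vec{x}^\star = \vec{x}_1^\star \otimes \dots \otimes \vec{x}_n^\star$, the preceding lemma combined with the multi-linearity of each $u_p$ and the linearity of each $\phi_p$ yields
\[
(\vec{x}^\star)^\top \mat{U} \vec{y} = \sum_p \bigl(u_p(\vec{x}^\star) - u_p(\mat{B}_p \vec{x}_p^\star, \vec{x}_{-p}^\star)\bigr) = 0,
\]
since every summand vanishes when $\mat{B}_p \vec{x}_p^\star = \vec{x}_p^\star$.

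\textbf{Stage 2 (derandomization).} Process the players in order $p = 1, 2, \ldots, n$, maintaining the invariant that the partial profile $\vec{z}^{(p)} = (\vone_{s_1^\star}, \ldots, \vone_{s_{p-1}^\star}, \vec{x}_p^\star, \ldots, \vec{x}_n^\star)$ satisfies $(\vec{z}^{(p)})^\top \mat{U} \vec{y} \geq 0$; Stage 1 supplies the base case $\vec{z}^{(1)} = \vec{x}^\star$. At step $p$, multi-linearity of the utilities implies that $(\vec{z}^{(p)})^\top \mat{U} \vec{y}$, viewed as a function of the marginal $\vec{x}_p^\star$, is linear and equals $\vec{c}_p^\top \vec{x}_p^\star$ for some coefficient $\vec{c}_p \in \bbQ^{d_p}$ obtainable in polynomial time: the $q = p$ contribution is $(\mat{I} - \mat{B}_p)^\top \vec{g}_p(\vec{z}_{-p}^{(p)})$, read off directly from \cref{asmpt:utility_gradient}, while each $q \neq p$ contribution is the gradient in the $p$-th coordinate of $u_q(\vec{z}) - u_q(\mat{B}_q \vec{z}_q, \vec{z}_{-q})$ and is extractable by evaluating \cref{asmpt:utility_gradient} at suitable perturbations of the marginals together with multi-linear interpolation. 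Maximizing $\vec{c}_p^\top \vec{x}_p$ over $\mathcal{A}_p$ -- possible in polynomial time by the equivalence of separation and optimization~\citep{Grotschel1993:Geometric} -- returns a vertex $s_p^\star \in \Pure_p$ with $\vec{c}_p^\top s_p^\star \geq \vec{c}_p^\top \vec{x}_p^\star \geq 0$, so replacing $\vec{x}_p^\star$ by $\vone_{s_p^\star}$ preserves the invariant.

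After all $n$ rounds we obtain a pure profile $s^\star = (s_1^\star, \ldots, s_n^\star) \in \Pure_1 \times \cdots \times \Pure_n$ satisfying $\vone_{s^\star}^\top \mat{U} \vec{y} \geq 0$. The row vector $\vone_{s^\star}^\top \mat{U}$ has one entry for the $\emptyseq$-column, equal to $\sum_p u_p(s^\star)$, and one entry per triple $(p, a, b)$, equal to $-s_p^\star[a] \cdot u_p(\vone_b, s_{-p}^\star)$; each is a scalar computable in polynomial time because \cref{asmpt:utility_gradient} supplies $\vec{g}_p(s_{-p}^\star)$ and therefore $u_p(\vone_b, s_{-p}^\star) = \vec{g}_p(s_{-p}^\star)[b]$. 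The chief obstacle is Stage 2: one must use multi-linearity to reduce the potentially exponential pure-strategy set $\Pure_p$ to a polynomial-size linear optimization over $\mathcal{A}_p$, and must also show that the cross-player coefficient vector $\vec{c}_p$ -- which is not directly packaged by \cref{asmpt:utility_gradient} -- can nevertheless be assembled in polynomial time by leveraging the multi-linear structure of the utilities and linearity of the $\phi_q$.
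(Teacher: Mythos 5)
Your Stage 1 coincides exactly with the paper's: fixed points of each $\phi_p$ exist by Brouwer, are computed as a linear feasibility problem over $\mathcal{A}_p$ via the ellipsoid method, and the resulting product distribution gives $(\vec{x}^\star)^\top \mat{U}\vec{y}=0$. Stage 2, however, diverges from the paper and is where your argument has a genuine gap. You propose to assemble the full coefficient vector $\vec{c}_p$ of the map $\vec{x}_p \mapsto (\vec{z}^{(p)})^\top\mat{U}\vec{y}$ and then run a linear optimization over $\mathcal{A}_p$. The $q=p$ block of $\vec{c}_p$ is indeed $(\mat{I}-\mat{B}_p)^\top\vec{g}_p(\cdot)$ and comes straight from \cref{asmpt:utility_gradient}. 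But for $q\neq p$ the relevant object is the dependence of $u_q(\vec{z}_q,\vec{z}_{-q})-u_q(\mat{B}_q\vec{z}_q,\vec{z}_{-q})$ on player $p$'s marginal, i.e.\ a mixed derivative $\nabla_{\vec{z}_p}\nabla_{\vec{z}_q}u_q$ contracted with $(\mat{I}-\mat{B}_q)\vec{z}_q$; \cref{asmpt:utility_gradient} only packages gradients with respect to a player's \emph{own} strategy, evaluated at product distributions whose marginals lie in the $\mathcal{A}_q$'s. Your proposed fix---``suitable perturbations of the marginals together with multi-linear interpolation''---is exactly the missing step: arbitrary perturbations of $\vec{x}_p$ leave $\mathcal{A}_p$, so the assumption no longer applies to them, and when $\mathcal{A}_p$ is not full-dimensional (as for sequence-form polytopes) one can only recover the linear functional on $\aff(\mathcal{A}_p)$, after first finding $\dim(\mathcal{A}_p)+1$ affinely independent points of $\mathcal{A}_p$ using the separation oracle. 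This can all be made to work, but as written it is asserted rather than proved, and you yourself flag it as ``the chief obstacle'' without resolving it.

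The paper sidesteps this entirely by never forming $\vec{c}_p$. It applies the algorithmic Carath\'eodory theorem \citep[Theorem 6.5.11]{Grotschel1993:Geometric} to write $\vec{x}_p^\star=\sum_{i=1}^{k}\lambda_i\hat{\vec{s}}_p\^i$ with $k\le d_p+1$ vertices, notes that $(\vec{z}^{(p)})^\top\mat{U}\vec{y}$ is the corresponding convex combination of the values $\vec{x}_{(p\to\hat{\vec{s}}_p\^i)}^\top\mat{U}\vec{y}$, and picks any $i$ whose value is nonnegative. Each of these $k$ quantities is a \emph{scalar} evaluation of $\vec{x}^\top\mat{U}\vec{y}$ at a product distribution, which \eqref{eq:product_utility_grad} reduces directly to $n$ calls to \cref{asmpt:utility_gradient}. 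So the paper's purification needs only evaluation access, whereas yours needs gradient access in a direction the assumption does not supply. If you want to keep your LP-based Stage 2, you must spell out the affine-interpolation construction of $\vec{c}_p$ restricted to $\aff(\mathcal{A}_p)$; otherwise the Carath\'eodory route is both simpler and fully covered by the stated hypotheses.
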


\begin{proof}
As we have discussed, we can denote all linear transformations $\phi_p$ using a matrix $\mat{B}_p$ such that $\phi_p(\vec{x}_p) = \mat{B}_p \vec{x}_p$.

First note that there always exists a fixed-point of any linear strategy transformation $\phi_p$; this follows from Brouwer's fixed-point theorem and the fact that these transformations are continuous maps of a compact convex set $\mathcal{A}_p$ to itself. Additionally, since the transformations are linear we can always efficiently compute a fixed-point of any transformation by solving the following LP:
\[
    \text{find}\ & \vec{x}_p\\
    \text{s.t.}\ &\mat{B}_p \vec{x}_p = \vec{x}_p \\
    &\vec{x}_p \in \mathcal{A}_p
\]
that can be solved in polynomial time using the ellipsoid method with the given separation oracle for $\mathcal{A}_p$.

Next, let us restrict our attention only to product distributions $\vec{x} \in \Delta(\Pure_1 \times \dots \times \Pure_n)$. In this case it will be $x_s = x_{-p}(\vec{s}_{-p}) x_p(\vec{s}_p)$ for all pure strategy profiles $s$, which gives
\[
    \vec{x}^\top \mat{U} \vec{y} &= \sum_{p} \sum_{\vec{s}_{-p}} \sum_{\vec{s}_p \in \Pure_p} x_{-p}(\vec{s}_{-p}) x_p(\vec{s}_p) \left[ u_p(\vec{s}_p, \vec{s}_{-p}) - u_p(\phi_p(\vec{s}_p), \vec{s}_{-p})
    \right]\\
    &= \sum_{p} \sum_{\vec{s}_{-p}} x_{-p}(\vec{s}_{-p}) u_p\left( \left[ \vec{x}_p - \phi_p(\vec{x}_p) \right], \vec{s}_{-p} \right)\\
    &= \sum_{p} \vec{g}_p(\vec{x}_{-p}) \cdot \left[ \vec{x}_p - \phi_p(\vec{x}_p) \right] \numberthis{eq:product_utility_grad},
\]
where $\vec{x}_p \in \mathcal{A}_p$ is the marginal distribution for player $p$ represented as a point of the polyhedral strategy set.
In the second equality we have used the multi-linearity of $u_p(\cdot, \vec{s}_{-p})$ and the linearity of the transformations; $\sum_{\vec{s}_p} x_p(\vec{s}_p) \phi_p(\vec{s}_p) = \phi_p(\vec{x}_p)$. It directly follows from the last equality that if we set each marginal distribution equal to the corresponding fixed-point $\vec{x}_p = \phi_p(\vec{x}_p)$, we get a product distribution $\vec{x}$ such that $\vec{x}^\top \mat{U} \vec{y} = 0$.

Now, it remains to find a way to extract the desired pure strategy profile $s$ from this product distribution. We follow a similar procedure to the purification technique used by \citet{Jiang2015:Polynomial}. Similar to their algorithm, we define $\vec{x}_{(p \to \vec{s}_p)}$ for a product distribution $\vec{x}$ to be the product distribution in which player $p$ plays pure action $\vec{s}_p \in \mathcal{A}_p$ and all other players act according to $\vec{x}_{-p}$. Additionally, note that since $\mathcal{A}_p$ is a polytope, it must hold
\[
    \vec{x}_p = \sum_{\vec{s}_p \in V(\mathcal{A}_p)} \lambda_{\vec{s}_p} \vec{s}_p
\]
for some convex combination $\{ \lambda_{\vec{s}_p} \geq 0 \mid \sum_{\vec{s}_p} \lambda_{\vec{s}_p} = 1 \}$.
By the product distribution structure, it is easy to see that for every player $p \in [n]$,
\[
    \numberthis{eq:convex_comb_pure} \vec{x}^\top \mat{U} \vec{y} = \sum_{\vec{s}_p \in V(\mathcal{A}_p)} \left[ \vec{x}_{(p \to \vec{s}_p)}^\top \mat{U} \vec{y} \right] \lambda_{\vec{s}_p}
\]

The algorithm of \citet{Jiang2015:Polynomial} iterates over all players and for each player $p$ they search over all its pure strategies and find one, $\vec{s}_p^*$, for which $\vec{x}_{(p \to \vec{s}_p^*)}^\top \mat{U} \vec{y} \geq 0$. Such a pure strategy must always exist because \eqref{eq:convex_comb_pure} represents a convex combination over all vertices (a.k.a. pure strategies). However, in our case we cannot iterate over all pure strategies for a player because they might be exponentially many.

To make this procedure general for all polyhedral games, we observe that by Carath\'eodory's theorem there must always exist a subset $\left\{ \hat{\vec{s}}_p\^1, \dots, \hat{\vec{s}}_p\^k \right\} \subset V(\mathcal{A}_p)$ of at most $k \leq d_p + 1$ vertices of $\mathcal{A}_p$ that satisfy
\[
    \vec{x}_p = \sum_{i=1}^k \lambda_i \hat{\vec{s}}_p\^i
\]
for some convex combination represented with $\lambda_1, \dots, \lambda_k$. Thus, we can follow the same procedure as before but this time only search over $k$ vertices instead of all (possibly exponentially many) vertices of $\mathcal{A}_p$. The complete algorithm is shown in \cref{alg:purified_ger_oracle}.

This can be implemented in polynomial time because: (a) there exists an algorithmic version of Carath\'eodory's theorem \citet[Theorem 6.5.11]{Grotschel1993:Geometric} that only requires access to a separation oracle for $\mathcal{A}_p$, and (b) \cref{asmpt:utility_gradient} allows us to compute $\vec{x}_{(p \to \vec{s}_p^*)}^\top \mat{U} \vec{y}$ in polynomial time for any product distribution $\vec{x}_{(p \to \vec{s}_p^*)}$, as is evident from \eqref{eq:product_utility_grad}.
\end{proof}

\begin{algorithm}[ht]
    \SetAlgoNoLine
    \caption{Purified \texttt{GER} oracle}\label{alg:purified_ger_oracle}
	\KwIn{Polyhedral game G of $n$ players, $\vec{y} \in \mathcal{Y} = \Phi_1 \times \dots \times \Phi_n$, separation oracles $\texttt{SEP}_{\mathcal{A}_p}$ for all $p \in [n]$.}
	\KwOut{A pure strategy profile $s \in \Pure_1 \times \dots \times \Pure_n$ such that $\vone_s^\top \mat{U} \vec{y} \geq 0$.}

    Compute a product distribution $\vec{x}$ s.t. $\vec{x}^\top \mat{U} \vec{y} = 0$ by finding fixed points $\vec{x}_p = \phi_p(\vec{x}_p)$ for all $p \in [n]$ \;

    \For{$p \in [n]$}{
        Find a set of $k \leq d_p+1$ vertices $\left\{ \hat{\vec{s}}_p\^1, \dots, \hat{\vec{s}}_p\^k \right\} \subset V(\mathcal{A}_p)$  s.t. $\vec{x}_p = \sum_{i=1}^k \lambda_i \hat{\vec{s}}_p\^i$ \;

        Set $\vec{s}_p^*$ to the vertex $\hat{\vec{s}}_p\^i$ that satisfies $\vec{x}_{(p \to \vec{s}_p^*)}^\top \mat{U} \vec{y} \geq 0$\;
        
        Set $\vec{x}$ to be $\vec{x}_{(p \to \vec{s}_p^*)}$ \;
    }
    Finally, $\vec{x}$ must correspond to a pure strategy profile $s$\;
\end{algorithm}

\begin{theorem}\label{thm:phi_equilibria_polyhedral}
    Let $G$ be a polyhedral game (\cref{defn:polyhedral_game}) of $n$ players and $\{\Phi_p\}$ be a collection of polytopes corresponding to sets of linear strategy transformations that map every strategy set $\mathcal{A}_p$ to itself. Additionally, let $N = \sum_p d_p^2$. Assume that
    \begin{itemize}
        \item there exist polynomial-time separation oracles for $\mathcal{A}_p$ and $\Phi_p$,

        \item $G$ satisfies the polynomial utility gradient property (\cref{asmpt:utility_gradient}),
        
        \item $\psi$ is an upper bound on the facet-complexity of every $\mathcal{A}_p$ and $\Phi_p$,

        \item $\log u$ is the maximum encoding length of the utilities of $G$.
    \end{itemize}
    Then there exists an algorithm that computes an exact $\{\Phi_p\}$-equilibrium of $G$ in time $\poly(N, \log u, \psi)$ and performs $\poly(N, \log u, \psi)$ number of calls to all the separation oracles. Additionally, the equilibrium is represented as a convex combination of at most $N$ pure strategy profiles.
\end{theorem}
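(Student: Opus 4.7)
The plan is to apply the algorithmic framework of Theorem~\ref{thm:exact_bilinear_framework} to the LP \eqref{eq:phi_meta_game}, with $\mathcal{X} = \Delta(\Pure_1 \times \dots \times \Pure_n)$ playing the role of the max-player's polytope and $\mathcal{Y} = \Phi_1 \times \dots \times \Phi_n$ (augmented with the dimension $\emptyseq$) playing the role of the min-player's polytope. By the discussion immediately following \eqref{eq:phi_meta_game}, we may assume without loss of generality that the identity transformation belongs to each $\Phi_p$ (replacing each $\Phi_p$ by $\co\{\Phi_p \cup \{\mat{I}\}\}$ if necessary), so any feasible $\vec{x}^*$ of \eqref{eq:phi_meta_game} is in fact a $\{\Phi_p\}$-equilibrium in the sense of Definition~\ref{defn:phi_equilibrium}. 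Hence it suffices to verify that the four hypotheses of Theorem~\ref{thm:exact_bilinear_framework} are met in our setting.

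I would verify the four conditions in turn. Condition (1) is immediate: $\mathcal{X}$ is a rational simplex, and $\mathcal{Y}$ is a Cartesian product of rational polytopes. For condition (2), a separation oracle for $\mathcal{Y}$ is obtained by running the individual separation oracles of the $\Phi_p$ on the respective blocks, while the required GER oracle is exactly the one constructed in Lemma~\ref{lem:ger_phi_equil}, whose hypotheses---the polynomial utility gradient property (\cref{asmpt:utility_gradient}) and separation oracles for each $\mathcal{A}_p$---are hypotheses of the theorem. For condition (3), writing the block-diagonal inequality system of $\mathcal{Y}$ shows that its facet-complexity is at most $\poly(n, \psi) = \poly(N, \psi)$.

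The step that needs the most care is condition (4): bounding the encoding length of $\vec{x}^\top \mat{U}$ for all GER responses and all vertices of $\mathcal{X}$. Both families of vectors are of the form $\vone_s$ for a pure strategy profile $s \in \Pure_1 \times \dots \times \Pure_n$, since Lemma~\ref{lem:ger_phi_equil} returns pure profiles and the vertices of $\Delta(\Pure_1 \times \dots \times \Pure_n)$ are indicator vectors. Each coordinate of $\vone_s^\top \mat{U}$ is either $\sum_p u_p(s)$ (at $j = \emptyseq$) or $-\vec{s}_p[a] \cdot u_p(\vone_b, \vec{s}_{-p})$. Every vertex $\vec{s}_p \in V(\mathcal{A}_p)$ has encoding length $\poly(\psi)$ by the facet-complexity bound, and by the multi-linearity of the utilities, $u_p(\vone_b, \vec{s}_{-p})$ is obtained as a linear combination of entries of the succinctly represented utility with coefficients of encoding length $\poly(\psi)$; combined with the bound $\log u$ on utility magnitudes, each coordinate has encoding length $\poly(\log u, \psi)$. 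Since $\mat{U}$ has $N+1$ columns, the full row $\vone_s^\top \mat{U}$ has encoding length $\poly(N, \log u, \psi)$, which is the $\varphi$ in Theorem~\ref{thm:exact_bilinear_framework}.

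With all four conditions established, Theorem~\ref{thm:exact_bilinear_framework} directly produces, in $\poly(N, \log u, \psi)$ time and with $\poly(N, \log u, \psi)$ oracle calls, a feasible $\vec{x}^*$ for \eqref{eq:phi_meta_game} expressible as a convex combination of at most $\dim(\mathcal{Y}) = N$ GER responses, i.e., at most $N$ pure strategy profiles. The main obstacle here is not algorithmic novelty---the heavy lifting has already been carried out in Sections~\ref{sec:eah_exact} and the Lemma~\ref{lem:ger_phi_equil} construction---but careful bookkeeping: propagating the encoding-length bounds through the multi-linear utility structure and the Cartesian product, and confirming that the LP solution we extract genuinely satisfies the per-player deviation constraints of Definition~\ref{defn:phi_equilibrium} once the identity transformation is in each $\Phi_p$.
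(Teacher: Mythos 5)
Your proposal is correct and follows essentially the same route as the paper's proof: reduce to \cref{thm:exact_bilinear_framework} applied to the Correlator--Deviator meta-game, use \cref{lem:ger_phi_equil} as the \texttt{GER} oracle, assemble the separation oracle for $\mathcal{Y}$ from the per-player oracles, and bound the encoding length of the rows of $\mat{U}$ to set $\varphi$. Your treatment is in fact slightly more careful than the paper's in spelling out why feasibility of \eqref{eq:phi_meta_game} yields a $\{\Phi_p\}$-equilibrium per \cref{defn:phi_equilibrium} (via the identity transformation) and in propagating the vertex encoding lengths $\poly(\psi)$ through the entries of $\mat{U}$, but these are points the paper handles in the surrounding discussion rather than a genuine difference of approach.
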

\begin{proof}
    The set $\mathcal{X} = \Delta(\Pure_1 \times \dots \times \Pure_n)$ is trivially a rational polytope and the set $\mathcal{Y} = \Phi_1 \times \dots \times \Phi_n \subset \bbR^N$ is the Cartesian product of rational polytopes, hence a rational polytope.
    Furthermore, we can directly construct a polynomial-time separation oracle for $\mathcal{Y}$ by calling the separation oracles for each one of the sets $\Phi_p$.
    Additionally, every row of the $\mat{U}$ matrix has $N$ entries, each with encoding length at most $2 \log u$.
    Using the good-enough-response oracle from \cref{lem:ger_phi_equil}, each response $(\vec{x}_k, \vec{x}_k^\top \mat{U}) \in \mathcal{X} \times \bbQ^N$ corresponds to a pair of a pure strategy profile (vertex of $\mathcal{X}$) and a row of $\mat{U}$. Thus, each response has encoding length at most $2 N \log u$. Set $\varphi = \max(2 N \log u, \psi)$.
    
    Now, we can apply \cref{thm:exact_bilinear_framework}, which gives us an algorithm running in $\poly(N, \varphi)$ time and performing $\poly(N, \varphi)$ oracle calls. Combining this with \cref{lem:ger_phi_equil}, it follows that the total time complexity is $\poly(N, \varphi) = \poly(N, \log u, \psi)$. Finally, the optimal solution $\vec{x}^*$ will be comprised of a mixture of $N$ oracle responses. In other words, $\vec{x}^*$ will be an exact $\{\Phi_p\}$-equilibrium for the game $G$ with probability mass on at most $N$ pure strategy profiles.
\end{proof}

As a first application of this framework, we argue that it can be applied to normal-form games that satisfy the polynomial type and the polynomial expectation property, defined in \citet{Papadimitriou2008:Computing}. More precisely, a normal-form game is a polyhedral game where every strategy set is a probability simplex. Additionally, the set $\Phi$ of all linear transformations in normal-form games is that of swap-deviations which is equivalent to the set of all stochastic matrices \citep{Gordon08:No}. Both the sets of strategies and the sets of stochastic matrices can easily be represented as polytopes of bounded facet-complexity having polynomially many constraints. Finally, the polynomial utility gradient property (\cref{asmpt:utility_gradient}) reduces to the polynomial expectation property in normal-form games and the polynomial type property is implicitly satisfied (see \cref{rmrk:polynomial_type}). Thus, all requirements of \cref{thm:phi_equilibria_polyhedral} are satisfied and we have just proven the following corollary.

\begin{corollary}[Exact CE in normal-form games]
    If a normal-form game $G$ has the polynomial type and the polynomial expectation property, defined in \citet{Papadimitriou2008:Computing}, then our algorithm reduces to the Ellipsoid Against Hope (more precisely, the version by \citet{Jiang2015:Polynomial}) and computes an exact correlated equilibrium of $G$.
\end{corollary}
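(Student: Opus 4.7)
The plan is to verify, one-by-one, that each hypothesis of \cref{thm:phi_equilibria_polyhedral} is implied by the combination of the polynomial type and polynomial expectation properties, and then to invoke that theorem directly. First I would cast the normal-form game $G$ as a polyhedral game: each player $p$'s strategy set $\mathcal{A}_p$ is a probability simplex $\Delta^{d_p}$, which admits a trivial polynomial-time separation oracle and has facet-complexity $O(\log d_p)$. The polynomial type property guarantees that $n$ and each $d_p$ are polynomial in the size of the game, which in turn controls $N = \sum_p d_p^2$.

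Next, I would identify the deviation set. As noted by \citet{Gordon08:No}, the full set of linear self-maps of $\Delta^{d_p}$ coincides with the set of column-stochastic $d_p \times d_p$ matrices, so taking $\Phi_p$ to be all such matrices gives exactly the set of swap deviations that defines the correlated equilibrium in normal form. This set is cut out by $d_p^2 + d_p$ linear (equality/inequality) constraints, hence is a rational polytope with polynomially bounded facet-complexity and an explicit polynomial-time separation oracle. The fixed-point sub-routine inside \cref{alg:purified_ger_oracle} is then the standard stationary-distribution computation for a stochastic matrix.

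I would then check the polynomial utility gradient property (\cref{asmpt:utility_gradient}). For a product distribution $\vec x$ over pure profiles and any player $p$, the expected gradient $\vec g_p(\vec x_{-p}) = \Ex_{\vec{s}_{-p} \sim \vec x_{-p}}[\nabla u_p(\vec s_{-p})]$ has coordinates $\Ex_{\vec{s}_{-p} \sim \vec x_{-p}}[u_p(\vone_a, \vec s_{-p})]$. Each such entry is exactly the type of expectation that the polynomial expectation property of \citet{Papadimitriou2008:Computing} is designed to compute in polynomial time, so \cref{asmpt:utility_gradient} is inherited directly. The encoding lengths of the utilities are also polynomially bounded by the size of $G$, so the quantity $\log u$ in \cref{thm:phi_equilibria_polyhedral} is polynomial in the input size.

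Having assembled all the hypotheses, the last step is just to apply \cref{thm:phi_equilibria_polyhedral}, which outputs an exact $\{\Phi_p\}$-equilibrium in time $\poly(N, \log u, \psi)$ — that is, a $\{$swap-deviation$\}$-equilibrium, which is exactly the correlated equilibrium of $G$. The only non-trivial bookkeeping I anticipate is confirming that the purified \texttt{GER} oracle in our framework collapses, on a normal-form instance, to the purification step of \citet{Jiang2015:Polynomial}; this is essentially immediate since over a simplex Carath\'eodory returns at most $d_p+1$ vertices, recovering the coordinate-by-coordinate rounding of pure strategies used in their algorithm. I do not expect a real obstacle here: the corollary is essentially a dictionary-translation between the two frameworks, and the substantive work has already been done in \cref{thm:exact_bilinear_framework} and \cref{lem:ger_phi_equil}.
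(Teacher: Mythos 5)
Your proposal is correct and follows essentially the same route as the paper: cast the normal-form game as a polyhedral game with simplex strategy sets, identify $\Phi_p$ with the (stochastic-matrix) swap deviations of \citet{Gordon08:No}, observe that the polynomial expectation property yields \cref{asmpt:utility_gradient} and that the polynomial type property bounds $N$, and then invoke \cref{thm:phi_equilibria_polyhedral}. The extra details you supply (the explicit coordinates of $\vec g_p$, the stationary-distribution fixed point, and the collapse of the purified \texttt{GER} oracle to the purification step of \citet{Jiang2015:Polynomial}) are consistent elaborations of the paper's argument rather than a different proof.
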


As we have discussed, a very notable example of polyhedral games is that of extensive-form games. Next, we apply \cref{thm:phi_equilibria_polyhedral} to this class of games, and specifically to the set of all linear-swap deviations, recently defined in \citet{Farina2023:Polynomial}. In particular, this set contains all trigger deviations and thus, our algorithm also produces an extensive-form correlated equilibrium (EFCE) in a conceptually simpler manner than in the early work of \citet{Huang2008:Computing}.

\begin{corollary}[Exact LCE computation]
    There exists an algorithm that runs in $\poly(N, \log u)$ time and computes an exact linear-deviation correlated equilibrium (LCE) in an extensive-form game.
\end{corollary}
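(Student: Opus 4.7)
The plan is to instantiate \cref{thm:phi_equilibria_polyhedral} with $G$ taken to be the polyhedral game whose strategy sets $\mathcal{A}_p$ are the sequence-form polytopes of the players, and with $\Phi_p$ taken to be the polytope of linear-swap deviations on sequence-form strategies introduced by \citet{Farina2023:Polynomial} (with the polynomial-constraint representation of \citet{Zhang2024:Mediator}). This is essentially a checklist argument: each hypothesis of \cref{thm:phi_equilibria_polyhedral} needs to be verified against known structural properties of extensive-form games and linear-swap deviations.

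First I would recall that sequence-form strategy polytopes are rational polytopes cut out by the flow constraints at each information set, whose number of facets and facet-complexity are both polynomial in the size of the game tree; in particular they admit trivial polynomial-time separation oracles. Next, I would invoke the characterization of the set $\Phi_\mathrm{LIN}$ from \citet{Farina2023:Polynomial,Zhang2024:Mediator}, which exhibits $\Phi_\mathrm{LIN}$ as a polytope of linear-swap deviation matrices defined by polynomially many linear constraints in the game tree size, giving both a polynomial facet-complexity bound and a polynomial-time separation oracle. Combining these two observations fixes a polynomial upper bound $\psi$ on the facet-complexities of all $\mathcal{A}_p$ and $\Phi_p$, and also ensures that the linear-swap deviations really do map each $\mathcal{A}_p$ into itself linearly, which is the structural prerequisite of \cref{thm:phi_equilibria_polyhedral}.

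The only nontrivial hypothesis to verify is the polynomial utility gradient property (\cref{asmpt:utility_gradient}). Here I would use the standard fact that in an extensive-form game with sequence-form strategies, player $p$'s utility is multilinear in the players' sequence-form vectors, and the coefficient of $\vec{x}_p[\sigma]$ in $u_p(\vec{x}_p, \vec{x}_{-p})$ is a sum over terminal nodes reachable through sequence $\sigma$ of chance probabilities times payoffs times the product $\prod_{q \neq p} \vec{x}_q[\sigma_q]$ of the opponents' sequence-form entries on the path. Given a product distribution specified by sequence-form marginals, this coefficient can be evaluated in polynomial time by a single depth-first traversal of the game tree, so $\vec{g}_p(\vec{x}_{-p}) = \E_{\vec{s}_{-p}\sim \vec{x}_{-p}}[\nabla u_p(\vec{s}_{-p})]$ is computable in $\poly(N, \log u)$ time.

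Finally, with all hypotheses verified, I would apply \cref{thm:phi_equilibria_polyhedral} directly: the resulting algorithm runs in $\poly(N, \log u, \psi) = \poly(N, \log u)$ time and outputs an exact $\{\Phi_\mathrm{LIN}\}$-equilibrium, i.e.\ an exact LCE, represented as a mixture of at most $N$ pure sequence-form strategy profiles. I expect the only genuine subtlety in writing this out to be making the bookkeeping of facet-complexities tight (so that $\psi$ remains polynomial after taking the Cartesian product of the players' deviation polytopes), but the required bounds follow from the explicit polynomial-size representations cited above, so no new technical machinery beyond \cref{thm:phi_equilibria_polyhedral} and \cref{lem:ger_phi_equil} is needed.
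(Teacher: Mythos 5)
Your proposal is correct and follows essentially the same route as the paper: verify the hypotheses of \cref{thm:phi_equilibria_polyhedral} using the known polytopal characterization of linear-swap deviations from \citet{Farina2023:Polynomial} (polynomially many constraints, hence polynomial facet-complexity and an efficient separation oracle) together with the standard properties of sequence-form strategy polytopes, and note that the polynomial utility gradient property holds because the game-tree representation makes the expected utility gradient computable by a tree traversal. The extra detail you supply on evaluating the gradient coefficients is a harmless elaboration of what the paper dispatches in one sentence.
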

\begin{proof}
    We apply \cref{thm:phi_equilibria_polyhedral} for the set of linear-swap deviations. Specifically, in \citet[Theorem 3.1]{Farina2023:Polynomial} it is proved that:
    \begin{itemize}
        \item The set $\Phi_\text{LIN}$ of linear-swap deviations for a player $p$ is a rational polytope.
        
        \item This polytope can be described using a polynomial number of equality constraints, which immediately implies the existence of an efficient separation oracle.

        \item Every constraint of the characterization has at most $|\Seqs_p|^2$ coefficients, each belonging to $\{0, 1, -1\}$. Thus, the facet-complexity of $\Phi_\text{LIN}$ must be $\psi = |\Seqs_p|^2$.
    \end{itemize}
    Finally, since the number of non-zero utilities are at most equal to the game tree size, it trivially follows that extensive-form games satisfy the polynomial utility gradient property (\cref{asmpt:utility_gradient}).
    It follows that there exists a polynomial time algorithm for computing LCEs.
\end{proof}

Finally, we prove in \cref{thm:efg_br_hard} that, at least in the case of computing $\Phi$-equilibria in polyhedral games, the use of a good-enough-response over a best-response oracle is not just more elegant, but it is also necessary because constructing a best-response oracle is NP-hard. At the heart of our hardness result lies a reduction from SAT to equilibrium computation in extensive-form games that has also been used in the past to prove the hardness of equilibrium selection for EFCE and LCE \citep{vonStengel2008, Farina2023:Polynomial}. In a sense, constructing a best-response oracle is as hard as the equilibrium selection problem, while constructing a good-enough-response oracle amounts to computing fixed-points of strategy transformation functions. This further highlights the importance of having a framework akin to the one presented in \cref{sec:eah_exact} for designing new algorithms.

\begin{theorem}[Hardness of BR oracle]\label{thm:efg_br_hard}
    It is NP-hard to construct a best-response oracle for the Correlator in the Correlator-Deviator game.
\end{theorem}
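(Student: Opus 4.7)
The plan is to reduce SAT to the problem of computing the Correlator's best-response in the Correlator-Deviator game. Observe first that, since $\mathcal{X} = \Delta(\Pure_1 \times \cdots \times \Pure_n)$ is a simplex over pure strategy profiles, a best-response oracle for the Correlator against any fixed $\vec{y} = (\phi_1, \ldots, \phi_n) \in \mathcal{Y}$ must produce a pure profile
\[
s^\star \in \argmax_{s \in \Pure_1 \times \cdots \times \Pure_n} \sum_{p=1}^n \bigl[u_p(s) - u_p(\phi_p(\vec{s}_p), \vec{s}_{-p})\bigr].
\]
A polynomial-time oracle would therefore, for every boolean formula $\varphi$, allow us to decide satisfiability provided we can construct a polyhedral game $G_\varphi$ and a deviation vector $\vec{y}^\star \in \mathcal{Y}$ such that this optimum exceeds a computable threshold iff $\varphi$ is satisfiable.

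For the construction I would recycle the SAT-to-EFG gadget of \citet{vonStengel2008}, as refined by \citet{Farina2023:Polynomial} for their LCE/EFCE equilibrium-selection hardness results. That reduction yields a two-player extensive-form game in which pure sequence-form strategy profiles encode (i) a candidate truth assignment chosen by a \emph{prover} and (ii) a clause probe chosen by a \emph{verifier}, with terminal utilities designed so that the prover attains a distinguished utility threshold over all probes iff $\varphi$ is satisfiable. I would then take $\vec{y}^\star$ to be a single linear (trigger-type) deviation acting on the prover, lying inside the EFCE/LCE deviation polytope. The deviation is chosen so that the counterfactual term $u_p(\phi_p^\star(\vec{s}_p), \vec{s}_{-p})$ is either zero or a computable constant on the relevant strategy support, so that $\max_s \vone_s^\top \mat{U} \vec{y}^\star$ agrees, up to an additive constant, with $\max_s \sum_p u_p(s)$ in $G_\varphi$. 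By the gadget, this maximum reaches the distinguished threshold iff $\varphi$ is satisfiable, so a polynomial-time best-response oracle would decide SAT.

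The main obstacle is that $\phi_p^\star$ must be a genuine \emph{linear} self-map of the sequence-form polytope $\mathcal{A}_p$; in particular, it cannot be the constant zero map since $\vec{0} \notin \mathcal{A}_p$. The remedy is to tailor the game tree so that the subtree reached under the trigger deviation terminates in a gadget whose utilities depend only on the verifier's sequence $\vec{s}_{-p}$ (or are identically constant). This makes the counterfactual term $u_p(\phi_p^\star(\vec{s}_p), \vec{s}_{-p})$ independent of $\vec{s}_p$ on the supports of interest, precisely what is needed to isolate the combinatorial hardness of SAT inside the $\sum_p u_p(s)$ term. The remaining verifications---membership of $\vec{y}^\star$ in $\mathcal{Y}$, polynomial encoding length, and threshold correctness---are bookkeeping once the gadget from the cited prior hardness reductions is instantiated.
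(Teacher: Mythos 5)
Your approach is essentially the paper's: reduce from SAT via the von Stengel--Forges gadget and pick a deviation profile $\vec{y}^\star$ that makes the counterfactual terms constant, so that the Correlator's best response degenerates into social-welfare maximization over pure profiles. Two remarks on your instantiation. First, the paper realizes the ``constant counterfactual'' more simply than your proposed surgery on the gadget: it works with the CCE deviation set (constant/external maps $\phi_p(\vec{x}_p)=\bar{\vec{s}}_p$, which are linear self-maps once the strategy vectors carry a coordinate fixed to $1$), prepends an opt-out ``No'' action worth payoff $0$ for \emph{both} players, and deviates both players to ``No''; this disposes of your $\vec{0}\notin\mathcal{A}_p$ concern without redesigning the interior of the SAT gadget. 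Second, and this is the one substantive slip in your write-up: you deviate only the prover. Since $\vec{y}^\star$ must specify a component for every player, if the verifier's component is the identity then the $u_2(s)$ and $u_2(\phi_2(\vec{s}_2),\vec{s}_{-2})$ terms cancel and your objective collapses to $\max_s u_1(s)$ rather than $\max_s \sum_p u_p(s)$, so the claimed agreement ``up to an additive constant'' with social welfare fails. You need the counterfactual term of \emph{every} player to be constant (e.g., deviate both players to the opt-out action), after which the argument goes through exactly as you describe.
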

\begin{proof}
    A best-response oracle for the Correlator in the Correlator-Deviator game must respond with the optimal $\vec{x} \in \mathcal{X} = \Delta(\Pure_1 \times \dots \times \Pure_n)$ for any given $\vec{y} \in \mathcal{Y} = \Phi_1 \times \dots \times \Phi_n$. More precisely, we have to be able to compute
    \[
        \vec{x}^* = \argmax_{\vec{x} \in \mathcal{X}} \left\{ \vec{x}^\top \mat{U} \vec{y} \right\} = \argmax_{\vec{x} \in \mathcal{X}} \left\{ \sum_{p} \Ex_{s \sim \vec{x}}[u_p(s) - u_p(\phi_p(\vec{s}_p), \vec{s}_{-p})] \right\}
    \]
    for all $\vec{y} \in \mathcal{Y}$.

    To prove that this process is intractable it suffices to find a game and an equilibrium concept such that it is NP-hard to compute $\vec{x}^*$ for at least one $\vec{y} \in \mathcal{Y}$. For the solution concept we choose the coarse-correlated equilibrium, in which the sets $\Phi_p$ consist of all constant (or external) deviations that output a fixed strategy $\phi_p(\vec{x}_p) = \bar{\vec{s}}_p \in \Pure_p$ no matter the input strategy $\vec{x}_p$. For the game, we choose the SAT-game that was also used to prove the hardness of equilibrium selection for EFCE and LCE in extensive-form games \citep{vonStengel2008, Farina2023:Polynomial}. The exact details are not important for our purposes, but we will only use the fact that any SAT instance can be encoded in a 2-player extensive-form game using a polynomial-time reduction. In this game, any pure strategy profile $s \in \Pure_1 \times \dots \times \Pure_n$ with social welfare (sum of players' utilities) equal to $2$ corresponds to a satisfying assignment for the SAT instance, while any other strategy profile has social welfare at most $2(1 - 1/n)$. Thus, there exists a Nash equilibrium (and hence, a CCE) corresponding to a pure strategy profile that has maximum social welfare.

    Before we proceed, we augment the SAT-game by adding an extra decision point for both players at the beginning of the game that asks whether they want to play. If both players respond ``Yes'' then the game continues as normal, otherwise --if at least one player responds ``No''-- the game ends and both players get a $0$ payoff. We denote the pure strategy of the ``No'' answer from player $1$ with $\vec{s}_1^{N}$ and the ``No'' answer from player $2$ with $\vec{s}_2^{N}$.

    Now, to prove the desired result consider a Correlator-Deviator game applied to the computation of a CCE in the above augmented SAT-game. Assume that there exists a polynomial-time best-response oracle for this game that returns a solution of polynomial size in the representation of the game (and hence, the size of the SAT instance). Then, we can use it to best-respond to $\vec{y} = (\phi_1, \phi_2)$ for $\phi_1(\vec{x}_1) = \vec{s}_1^N$ and $\phi_2(\vec{x}_2) = \vec{s}_2^N$. Specifically, we have
    \[
        \vec{x}^* &= \argmax_{\vec{x} \in \mathcal{X}} \left\{ \sum_{p} \Ex_{s \sim \vec{x}}[u_p(s) - u_p(\phi_p(\vec{s}_p), \vec{s}_{-p})] \right\}\\
        &= \argmax_{\vec{x} \in \mathcal{X}} \left\{ \sum_{p} \Ex_{s \sim \vec{x}}[u_p(s)] - \sum_{p} \Ex_{s \sim \vec{x}} [ u_p(\vec{s}_p^N, \vec{s}_{-p})] \right\}\\
        &= \argmax_{\vec{x} \in \mathcal{X}} \left\{ \sum_{p} \Ex_{s \sim \vec{x}}[u_p(s)] \right\}.
    \]
    In other words, the BR oracle returns in this case a distribution $\vec{x}^*$ over pure strategy profiles with maximum social welfare.
    Since the BR oracle computes a polynomially-sized $\vec{x}^*$ in polynomial-time, we can uncover a pure strategy profile of maximum social welfare that corresponds, by construction of the SAT-game, to a satisfying assignment of the SAT instance. We conclude that constructing a best-response oracle in the Correlator-Deviator game corresponding to the compution of a CCE in extensive-form games is NP-hard.
\end{proof}

\section{Discussion and Future Work}

In this paper, we devise a polynomial-time algorithm for computing min-max equilibria in bilinear zero-sum games, by utilizing a good-enough-response oracle. We use this machinery to develop a simple general framework for the \textit{exact} computation of $\Phi$-equilibria in polyhedral games for sets $\Phi$ of linear strategy transformations. This framework parallels that of \citet{Gordon08:No} on no-regret dynamics, but for exact equilibrium computation. Applying this to extensive-form games, we construct the first polynomial-time algorithm for computing exact linear-deviation correlated equilibria in extensive-form games -- a question that had been left open by \citet{Farina2023:Polynomial}.

We believe that having a simple framework to use as a mental model to guide algorithm design is of paramount importance for the advancement of the field. The $\Phi$-regret minimization framework of \citet{Gordon08:No} is indicative of this fact, because it has been key to many interesting results over the years \citep{Morrill2021:Efficient, Farina22:Simple, Anagnostides21:NearOptimal, Farina2023:Polynomial}. Compared to no-regret learning, the problem of exact equilibrium computation has been much less studied (basically only in \citet{Papadimitriou2008:Computing, Jiang2015:Polynomial, Huang2008:Computing}) and we hope that offering a simplified framework will give new insights to advance this front, perhaps aiding in the discovery of new, more practical, algorithms. 

Several key questions remain underinvestigated.
\begin{itemize}
    \item Despite its great theoretical importance, our framework (based on the ellipsoid algorithm) has a polynomial time complexity of rather large degree. Could one devise a more practical alternative while retaining a similar level of generality?

    \item Can our framework be easily extended to general convex strategy spaces (as opposed to polytopes)?

    \item Is there a similar algorithmic framework to compute exact $\Phi$-equilibria in extensive-form games for non-linear transformations $\Phi$? As a concrete direction, can such equilibria be computed efficiently when $\Phi$ is the set of all polynomial deviations on sequence-form strategies? In a recent work, \citet{Zhang2024:Efficient} give parameterized algorithms for minimizing $\Phi$-regret when $\Phi$ is the set of all degree-$k$ polynomial swap deviations or the set of $k$-mediator deviations. It would be interesting to understand whether similar guarantees can be achieved for high-precision computation of these equilibria.

    \item Can ideas similar to those presented in this paper be applied to Markov games?
\end{itemize}

\bibliographystyle{ACM-Reference-Format}

\appendix

\section{Omitted proofs}
\label{app:proofs}

\lempolylinprog*
\begin{proof}
    First we show that $\mathcal{P} = \mathcal{P}'$, where
    \[
        \mathcal{P}' := \left\{ \vec{u} \in \mathcal{U} \bigm\vert (\hat{\vec q}^\top \mat{A}) \vec{u} \leq c \ \forall {\hat{\vec q} \in V(\mathcal{Q})}
        \right\},
    \]
    is the polytope defined by finitely many inequality constraints, corresponding to the vertices of $\mathcal{Q}$. Combining this with the assumption that each such inequality has encoding length at most $\varphi$, the result follows immediately. It remains to prove the desired set equality:

    \begin{itemize}
        \item Case $\mathcal{P} \subseteq \mathcal{P}'$:
        \[
            \vec{u} \in \mathcal{P} \implies \vec{u} \in \mathcal{U}, \  \max_{\vec q \in \mathcal{Q}} \vec{q}^\top \mat{A} \vec{u} \leq c \implies \vec{u} \in \mathcal{U}, \  (\hat{\vec{q}}^\top \mat{A}) \vec{u} \leq c \ \forall {\hat{\vec{q}} \in V(\mathcal{Q})},
        \]
        where the last implication follows from the fact that for all $\hat{\vec{q}} \in V(\mathcal{Q}) \subseteq \mathcal{Q}$,
        \[
            (\hat{\vec{q}}^\top \mat{A}) \vec{u} \leq \max_{\vec q \in \mathcal{Q}} \vec{q}^\top \mat{A} \vec{u} \leq c.
        \]

        \item Case $\mathcal{P} \supseteq \mathcal{P}'$:
        By definition, any point $\vec q \in \mathcal{Q}$ can be written as the convex combination of all vertices $\vec q = \sum_i^K \lambda_i \hat{\vec q}_i$. Thus, we have
        \[
            \vec u \in \mathcal{P}' \implies \vec{u} \in \mathcal{U}, \  (\hat{\vec{q}}^\top \mat{A}) \vec{u} \leq c \ \forall {\hat{\vec{q}} \in V(\mathcal{Q})} \implies \vec{u} \in \mathcal{U}, \ \max_{\vec q \in \mathcal{Q}} \vec{q}^\top \mat{A} \vec{u} \leq c,
        \]
        where the last implication holds because for any $\vec q \in \mathcal{Q}$,
        \[
            \vec{q}^\top \mat{A} \vec{u} = \sum_i^K \lambda_i \hat{\vec q}_i^\top \mat{A} \vec{u} \leq \sum_i^K \lambda_i c.
        \]
    \end{itemize}
    This concludes the proof.
\end{proof}

\lemfarkas*
\begin{proof}
    I) We first show that (1) and (2) cannot be true simultaneously. Assume otherwise and let $\hat{\vec{x}} \in \mathcal{X}$, $\hat{\vec{y}} \in \bbR_+ \mathcal{Y}$ be values that satisfy (1) and (2) respectively. Since $\hat{\vec{y}}$ belongs to the conic hull of $\mathcal{Y}$ it must be $\hat{\vec{y}} = k \vec{y}'$ for some $k > 0$ and $\vec{y}' \in \mathcal{Y}$. Thus,
    \[
        \max_{\vec{x} \in \mathcal{X}} \vec{x}^\top \mat{A} \hat{\vec{y}} \leq -1 \implies \max_{\vec{x} \in \mathcal{X}} \vec{x}^\top \mat{A} \vec{y}' \leq -\frac{1}{k}.
    \]
    Additionally, it holds
    \[
        0 \leq \min_{\vec{y} \in \mathcal{Y}} \hat{\vec{x}}^\top \mat{A} \vec{y} \leq \hat{\vec{x}}^\top \mat{A} \vec{y}' \leq \max_{\vec{x} \in \mathcal{X}} \vec{x}^\top \mat{A} \vec{y}' \leq -\frac{1}{k},
    \]
    which is a contradiction. Thus, the statements (1) and (2) cannot be true simultaneously.

    II) We now proceed to prove that when (2) is false then (1) must be true. We begin by showing that (2) being false implies that for any $\gamma > 0$ there does not exist any $\vec{y} \in \bbR_+ \mathcal{Y}$ such that $\displaystyle \max_{\vec{x} \in \mathcal{X}} \vec{x}^\top \mat{A} \vec{y} \leq -\gamma$. Suppose otherwise; then $\vec{y}' = \vec{y} / \gamma$ is a multiple of an element of $\bbR_+ \mathcal{Y}$ and thus $\vec{y}' \in \bbR_+ \mathcal{Y}$. Furthermore,
    \[
        \max_{\vec{x} \in \mathcal{X}} \vec{x}^\top \mat{A} \vec{y}' = \frac{1}{\gamma} \max_{\vec{x} \in \mathcal{X}} \vec{x}^\top  \mat{A} \vec{y} \leq -1,
    \]
    which is a contradiction because we have assumed that (2) is false. It directly follows that
    \[
        \min_{\vec{y} \in \mathcal{Y}} \max_{\vec{x} \in \mathcal{X}} \vec{x}^\top  \mat{A} \vec{y} \geq 0.
    \]
    By the minimax theorem it also holds
    \[
        \max_{\vec{x} \in \mathcal{X}} \min_{\vec{y} \in \mathcal{Y}} \vec{x}^\top  \mat{A} \vec{y} \geq 0 \implies \exists_{\vec{x} \in \mathcal{X}}:\  \min_{\vec{y} \in \mathcal{Y}} \vec{x}^\top  \mat{A} \vec{y} \geq 0
    \]
    and thus, statement (1) is true.

    III) Finally, we need to prove the inverse direction; when (1) is false then (2) must be true. This is trivial because if (2) was false, then by II) we would have that (1) is true, which contradicts the assumption.
\end{proof}

\end{document}